\newcommand{\Z}{\mathbb{Z}}
\newcommand{\N}{\mathbb{N}}
\newtheorem{theorem}{Theorem}
\newtheorem{corollary}{Corollary}
\newtheorem{lemma}[theorem]{Lemma}
\theoremstyle{definition}
\newtheorem{definition}[theorem]{Definition}
\newtheorem{example}{Example}
\newtheorem{remark}{Remark}
\newcommand\graphs[1]{\mathcal{G}(#1)} 
\newcommand\cas[1]{\mathcal{CA}(#1)} 
\newcommand\isbydef{\overset{\textrm{def}}{=}}
\newcommand\quantif{\textbf{Q}}
\title{FO logic on cellular automata orbits equals MSO logic}
\author{Guillaume Theyssier (I2M, CNRS, Université Aix-Marseille, France)}
\begin{document}
\maketitle

\begin{abstract}
  We introduce an extension of classical cellular automata (CA) to arbitrary labeled graphs, and show that FO logic on CA orbits is equivalent to MSO logic. 
  We deduce various results from that equivalence, including a characterization of finitely generated groups on which FO model checking for CA orbits is undecidable, and undecidability of satisfiability of a fixed FO property for CA over finite graphs. We also show concrete examples of FO formulas for CA orbits whose model checking problem is equivalent to the domino problem, or its seeded or recurring variants respectively, on any finitely generated group. For the recurring domino problem, we use an extension of the FO signature by a relation found in the well-known Garden of Eden theorem, but we also show a concrete FO formula without the extension and with one quantifier alternation whose model checking problem does not belong to the arithmetical hierarchy on group $\Z^2$.
\end{abstract}

\section{Introduction}
\label{sec:intro}

Symbolic dynamics was historically introduced as the study of one-dimensional infinite words representing discretized orbits of smooth dynamical systems through a finite partition of space \cite{morsehedlund,Lind_2020}. It has since then been largely extended to higher dimensions and arbitrary Cayley graphs of finitely generated groups, and seen rich developments going way beyond the initial motivations. The field of symbolic dynamics would now be better described as the study of sets of configurations (\textit{i.e.} coloring of a graph) which can be defined by local uniform constraints (subshift of finite type, sofic subshift, etc) and maps on configurations acting by a uniform and local update rule (cellular automata, and morphisms between subshifts, which are the continuous maps commuting with translations \cite{hedlund,cagroups}).
 A fascinating aspects of these objects is that they are very simple to define, yet can produce very complex behaviors which make them challenging to analyze. They can be considered as a reasonable modeling tool \cite{Chopard_1998}, but more importantly, they constitute a natural model of computation for which undecidablity and computational hardness can arise in the most simple and seemingly unrelated questions in a spectacular way \cite{entrsft,Kari_1994,Harel_1985,berger,Kari_injsurj}.

 In symbolic dynamics, a major trend is to relate the properties of the considered objects (cellular automata or subshifts) to the structure of the underlying graph they are defined on (usually Cayley graphs of finitely generated groups). An emblematic example is the domino problem: since the breakthrough undecidability result of Berger \cite{berger}, a lot of works have focused on the characterization of Cayley graphs of groups for which the domino problem is undecidable  \cite{Aubrun_2018,dominosurface,Bartholdi_2022,bartholdi2023domino} or other graphs with less symmetries \cite{dominoexcludedminor,Hellouin_de_Menibus_2023}. Concerning cellular automata (CA),
properties of the global map that can be expressed in first-order (FO) logic (or, said differently, FO properties of their orbit graph) are already challenging. 
For instance, injectivity and surjectivity problems were shown decidable on $\Z$ \cite{Amoroso_1972} and on context-free graphs \cite{Muller_1985}, but undecidable on $\Z^2$ \cite{Kari_injsurj}. Other FO properties of CA were studied in relation to the graph structure: 
the Gotschalk conjecture \cite{cagroups,Gottschalk_1973} asks whether the property ``injectivity implies surjectivity'' is true for all CA on all group. Besides, the garden of Eden theorem \cite{cagroups} characterizes amenability among finitely generated groups by the fact that another simple FO property\footnote{As detailed in Section~\ref{sec:cayley-graphs-domino}, it uses an additional relation in the FO signature which doesn't break the main point of our approach, an equivalence with MSO.} is true for all CA on this group.

It turns out that many central problems considered in symbolic dynamics can actually be rephrased in \emph{monadic second order logic} (MSO). It was for instance noticed in \cite{Muller_1985} for injectivity and surjectivity of CA. MSO is a logical language that has received enormous interest, probably for its balance between expressivity and decidability in many cases.
More precisely, for graph properties, algorithmic metatheorems \cite{Muller_1985,Courcelle_1990,courcellebook} and reciprocals \cite{KreutzerT10,Kuske_2005} relates the complexity of the MSO model checking problem to the structure of considered graph or graph family. The key parameter at play here is treewidth \cite{Robertson_treewidth} which is related to the non-existence of arbitrarily large grid minors \cite{Robertson_1986}: bounded treewidth provides positive algorithmic results, while arbitrarily large grid minors often allows lower bounds or undecidability results. For instance, on a Cayley graph of a finitely generated group, the MSO model checking problem is decidable if and only if the group is virtually free (meaning having a free group as finite index subgroup) \cite{Kuske_2005}.

Of course, positive algorithmic results for MSO apply to the particular problems studied in symbolic dynamics, but lower bounds or undecidability results for MSO are not directly transferable. For instance, on the graph $\Z^2$, undecidability of MSO model checking follows from a straightforward encoding of Turing machines, while undecidability of the domino problem \cite{berger} or of injectivity problem of CA \cite{Kari_injsurj} requires detailed and non-straightforward constructions that involves ideas and tools of independent interest (aperiodic tile sets and space filling curves for instance). Moreover, it is still open to our knowledge whether both problems are undecidable on any Cayley graph of group which is not virtually free. More generally, much work remains to close the gap between the global understanding of MSO logic on arbitrary graphs and the particular MSO fragments at sake in symbolic dynamics that are mostly understood on Cayley graphs of some group families.

\subparagraph*{Contributions of the paper.}
In this paper we explore properties of CA on arbitrary labeled graphs (finite or infinite). To do this, we introduce a definition of local rules of CA that doesn't use any explicit reference to the local structure of the graph as it is classically done, but instead just relies on the notion of bounded labeled walks and multiset of possible states read at the end of these walks. In particular, we don't require the graph to be uniform nor to have bounded degree, but our notion exactly corresponds to the classical one on Cayley graphs of finitely generated groups. Moreover, a fixed CA local rule provides a well-defined CA global map on any graph with same label sets. We can thus explore the influence of the graph separately from the local rule. For instance, we can specify a graph property by a CA local rule $f$ and a FO property $\phi$ of CA orbits:  the set of graphs on which the local rule $f$ induces a global map that satisfies $\phi$. Our main result (theorems~\ref{theo:FOCAtoMSO}, \ref{theo:MSOFOCACON} and \ref{theo:MSOFOCA}) is then an equivalence between this approach and MSO logic, on arbitrary labeled graphs. Precisely, for every class $\mathcal{C}$ of graphs, the following two conditions are equivalent:

\begin{itemize}
\item there exists an MSO formula $\Psi$, such that $G\in\mathcal{C} \iff G\models\Psi$,  
\item there exists a pair $(\phi,f)$, where $\phi$ is a FO formula and $f$ is a CA local rule, such that ${G\in \mathcal{C}}$ if and only if the global map induced by $f$ on $G$ satisfies $\phi$.
\end{itemize}
Moreover, the pair $(\phi,f)$ can be effectively constructed from $\Psi$, and conversely.

Said differently, (FO,CA) pairs and MSO formulas define exactly the same graph languages, and the corresponding model checking problems are many-one equivalent on any fixed graph. We believe that this new characterization of MSO is particularly relevant in the context of symbolic dynamics. 

First, FO properties of orbits of CA are a conjugacy invariant and were much studied as mentioned above. We obtain a characterization of the decidability of FO model checking for CA orbits on Cayley graphs (Corollary~\ref{coro:focamodelchecking}) and we show that undecidability can be obtained with a fixed FO formula (Corollary~\ref{coro:onefo}) exactly on non virtually free f.g. groups, which should be put in perspective with the Ballier-Stein conjecture \cite{Ballier_2018}. We also obtain undecidability of a satisfiability problem for CA on finite graphs for a fixed FO formula (Corollary~\ref{coro:finitesatfoca}).

Besides, when fixing an arbitrary FO formula and letting the CA vary, we get fragments of MSO that make sense beyond the examples directly motivated by CA theory.
In particular, we show in Section~\ref{sec:cayley-graphs-domino} that on Cayley graphs of finitely generated groups, such fragments do not depend on the choice of generators, and that the domino problem and its variants (seeded and recurring) are equivalent to the model checking problem of some simple fixed FO formula (Theorem~\ref{theo:dominofoca}). For this we use an additional relation in the case of the recurring domino problem that remains MSO-expressible (Lemma~\ref{lem:msocayleyinfinite}). Finally we obtain a FO formula with just one quantifier alternation whose model checking problem does not belong to the arithmetical hierarchy when fixing the graph $\Z^2$ (Theorem~\ref{theo:foanadur}).

\subparagraph*{Warmup examples.} \label{sec:exa}
To fix ideas and give some intuition on how FO logic on CA orbits can be used to express graph properties, let us consider two well-known MSO properties and give an informal translation into a pair made of a FO formula and a CA local rule.

\begin{example}[$k$-Colorable graphs]
  Fix $k$ and consider an undirected graph ${G=(V,E)}$. Consider the CA local rule with state set ${S=\{0,\ldots, k-1\}}$ such that a vertex in state $i$ changes its state to ${i+1\bmod k}$ if it has a neighbor in state $i$ and remains in state $i$ otherwise. It can be checked that the CA induced on $G$ by this local rule has a fixed-point if and only if $G$ admits a proper vertex coloring with $k$ colors (\textit{i.e.} a coloring where no two neighboring vertices have the same color).
\end{example}

\begin{example}[Connected graphs]\label{exa:connected}
  Consider an undirected graph ${G=(V,E)}$. Consider the state set ${S=\{0,1,a_0,a_1,a_2\}}$ and the CA local rule such that a vertex in state ${i\in\{0,1\}}$ becomes ${1-i}$, a vertex in state $a_i$ becomes $0$ if it has a neighbor in ${\{0,1\}}$ and ${a_{i+1\bmod 3}}$ otherwise.
  We claim that $G$ is connected if and only if the CA induced on $G$ by this local rule has no periodic orbit of minimal period $6$, which is obviously an FO property of orbits.
\end{example}
\begin{proof}[Proof of the claim]
   If $G$ is not connected then we can define a configuration $c$ equal to $a_0$ on one component and $0$ on the others: $c$ is periodic of minimal period $6$ because no vertex in $a_i$ will ever have a neighbor in ${\{0,1\}}$, so these vertices will have a cyclic behavior of period $3$ and those from others components, a cyclic behavior of period $2$.
  If $G$ is connected, then no periodic orbit has minimal period $6$ because any configuration $c$ in such an orbit would necessarily contain a vertex in some state $a_i$ and another in a state from ${\{0,1\}}$ (otherwise the minimal period would be $2$ or $3$) and, since $G$ is connected, it would actually contain a vertex $v$ in state $a_i$ with a neighbor in a state from ${\{0,1\}}$. Then after one step this vertex $v$ would become $0$ and would never turn back to state $a_i$ by definition of $F$: this contradicts the fact that $c$ belongs to a periodic orbit.
\end{proof}

\subparagraph*{Comparison with another characterization of MSO by automata.} Several previous works established equivalence results between logic formalism and automata theory in the context of MSO languages of graphs. As mentioned above, \cite{Muller_1985} already noticed that some cellular automata properties can be translated into MSO. Following this, \cite{Thomas_1991} and \cite{Schwentick_1998} introduced tiling and automata recognizers that are equivalent to (small) fragments of MSO. In \cite{Reiter_2015}, alternating distributed graph automata are introduced that recognize exactly the languages of graphs definable in MSO logic. These distributed automata are close to CA in the sense that they run on configurations (coloring of the input graph by states) and use finite local memory and local communication between neighboring vertices. However, they are highly non-deterministic (alternating) and their accepting mechanism uses both initialization of the run to a particular configuration, and a global knowledge of the final configuration reached (precisely the set of states present in this configuration). Restrictions of this model to deterministic or non-deterministic automata (instead of alternating) gives strictly weaker fragments.

Our goal here is not to define a single automata model equivalent to MSO. Instead our approach motivated by symbolic dynamics uses deterministic CA on one hand and quantifier alternations in a separated FO formula which plays also the role of the accepting condition on the other hand. A key aspect is that we thus get natural fragments of MSO by fixing the FO formula and letting the CA rule vary. This also makes a strong difference with distributed alternating automata in the accepting mechanism since the FO formula does not offer any direct means of initializing some computation on a particular configuration, nor to detect presence of some particular states in a final configuration. 

\section{Formal definitions}
\label{sec:defs}

\subparagraph*{Graphs.} 
\newcommand\congraphs{\mathcal{C}}
 A ${(\Sigma,\Delta)}$-labeled graph is a graph ${G=(V,(E_\delta)_{\delta\in\Delta},L)}$, which can be finite or infinite, where ${L:V\to\Sigma}$ is the vertex labeling and ${E_\delta\subseteq V\times V}$ are the edges labeled by $\delta$. 
In such a graph, given some finite word ${w=w_1\cdots w_k\in\Delta^ \ast}$, a path labeled by $w$ is a sequence of vertices ${v_1,\ldots, v_{k+1}}$ such that, for any ${1\leq i\leq k}$, ${(v_i,v_{i+1})\in E_{w_i}}$.
All graphs considered in this paper are simple, meaning that there is at most one edge of a given label between two given vertices\footnote{The hypothesis that our graphs are simple will be used in sections~\ref{sec:coros} and \ref{sec:cayley-graphs-domino}. We choose to adopt this hypothesis across the entire paper for simplicity and clarity, however the main results from Section~\ref{sec:translation} should also hold without this hypothesis.}. Such a graph is said to be \emph{connected} if it is connected as an undirected and unlabeled graph, \textit{i.e.} if ${G=(V,E)}$ is connected where ${(v,v')\in E}$ if either ${(v,v')\in E_\delta}$ or ${(v',v)\in E_\delta}$ for some ${\delta\in\Delta}$. The set of connected graphs is denoted by ${\congraphs}$. $\Sigma$ will in some case be a singleton and can therefore be silently omitted: we speak about $\Delta$-labeled graph in this case. 
An important class of graphs studied in symbolic dynamics is that of Cayley graphs of finitely generated (f.g.) groups. Given a f.g. group $(\Gamma,\cdot)$ and a (finite) set of generators $\Delta$ (including their inverses), the associated Cayley graph is the $\Delta$-labeled graph where ${(\gamma,\gamma')\in E_\delta}$ if and only if ${\gamma'=\gamma\cdot\delta}$. 
An undirected graph $G_1$ is a minor of another undirected graph $G_2$ if it can be obtained from $G_2$ by deleting edges and vertices, and by contracting edges (\textit{i.e.} identifying the vertices incident to the edge without creating multiple edges).

\subparagraph*{Cellular automata.}
Given a finite set of states $S$ and a set of vertices $V$, a configuration is an element of $S^V$. It can be seen as a coloring of vertices by $S$.
A CA is a map from configurations to configurations, that is induced by a uniform local rule.
It is generally studied as a dynamical system through its set of orbits, which are sequences of configurations obtained by iterating the map from an initial configuration.

CA are usually defined over a fixed Cayley graph of a (f.g.) group \cite{cagroups}.
Following this classical approach, the local rule defining a CA is formally a lookup table and is bound to a particular graph as it relies on local patterns defined over bounded balls of the graph. It also requires the graph to be uniform and of bounded degree.
More general definitions were proposed that don't stick to a particular graph \cite{ARRIGHI_2017,Arrighi_2013}, but they still rely on the hypothesis of bounded degree and use a particular labeling by port numbers.

We introduce in this section a simple definition of CA on arbitrary labeled graphs. The key advantage of our formalism is that a given local rule actually defines a CA on \emph{any} labeled graph for fixed label sets. We can therefore fix a local rule and asks on which graphs the corresponding CA has a given property (as sketched in Section~\ref{sec:exa}): a pair made of a local rule and a property of CA dynamics actually defines a property of graphs. Moreover, on Cayley graphs of f.g. groups our formalism is equivalent to the classical one.
The CA local rules we consider can intuitively be seen as finite memory and finite distance exploring machines working as follows in parallel from each vertex $v$: they walk from $v$ following all possible $\Delta$-labeled walks up to some length $r$, and harvest states seen at the end of these walks and count their occurrences up to some constant $k$, then they decide from this information (a multiset) the new state at vertex $v$.
From this point of view, the edge labeling by $\Delta$ acts as local directions that give more information on the position in the graph given a labeled walk, while vertex labeling $\sigma$ gives some level of non-uniformity as in non-uniform cellular automata \cite{Dennunzio_2012}.


Let us now formalize this definition.
\newcommand\vreach{\mathcal{R}}
Given ${v\in V}$ and a word ${w\in\Delta^\ast}$, we denote by ${\vreach^w(v)}$ the set of vertices reachable from $v$ by a path labeled by $w$. If $w=\epsilon$ (empty word), then ${\vreach^w(v)=\{v\}}$ by definition. If $G$ is the Cayley graph of a f.g. group, then ${\vreach^w(v)}$ is always a singleton, however there are generally several paths reaching the same vertex.

\newcommand\mulset{\textsc{ms}}
\newcommand\patt{P}
\newcommand\mscap{\textsf{cap}}

The $k$-capped multisets over set $X$ are the multisets where no cardinality is greater than $k$, and are denoted ${\mulset^k(X) = {\{0,\ldots,k\}}^X}$.  Given a multiset ${m\in \N^X}$, we denote by ${\mscap^k(m)}$ the $k$-capped multiset such that ${\mscap^k(m)(x) = \max(m(x),k)}$. We denote by ${A^{\leq r}}$ the words of length at most $r$ over alphabet ${A}$ including the empty word $\epsilon$.  Given a set of states $S$ and a configuration ${c\in S^V}$ and ${v\in V}$, the $k$-capped pattern of radius $r$ at $v$ in $c$ is the $k$-capped multiset ${\patt(c,v,r,k)\in\mulset^k\bigl(\Delta^{\leq r}\times S\bigr)}$ defined by:
  \[\patt(c,v,r,k) = \mscap^k\bigl((w,s)\mapsto \#\{v'\in\vreach^w(v) : c_{v'}=s\}\bigr).\]

\begin{definition}[CA local rules and global maps]\label{def:CA}
  A CA local rule for ${(\Sigma,\Delta)}$-labeled graphs of state set $S$, radius $r$ and using $k$-capped multisets (${k\geq 1}$) is a map 
\[f : \Sigma\times\mulset^k\bigl(\Delta^{\leq r}\times S\bigr)\to S.\]
 For any ${(\Sigma,\Delta)}$-labeled graph with vertices $V$, the global CA map ${F_{G,f}: S^V\to S^V}$ associated to the local map $f$ and graph $G$ is then defined by 
\[F_{G,f}(c)_v = f(\sigma(c_v),\patt(c,v,r,k))\]
for any configuration ${c\in S^V}$ and any vertex $v\in V$.
\end{definition}

In the sequel, when considering a local map $f$, it always implicitly comes with a specified state set $S$ and values of $r$ and $k$ defining its domain and image sets.
We are mainly interested in the case where $\Sigma$ is a singleton (uniform CA), but incorporating $\Sigma$ in our definition allows non-uniformity in the local rule as in non-uniform CA \cite{Dennunzio_2012}.

\begin{remark}
On a Cayley graph of f.g. group, ${\patt(x,v,r,k)}$ gives all the information about configuration $x$ restricted to the ball in the graph centered in $v$ and with radius $r$: indeed, in this case, ${\patt(x,v,r,k)(q,w)=1}$ if and only if the unique vertex $v'\in\vreach^w(v)$ is such that ${x_{v'}=q}$, and $\vreach^w(v)$ describes the entire ball when ${w}$ enumerates $\Delta^{\leq r}$.
From this observation it follows that in the case of Cayley graphs and when $\Sigma$ is a singleton, the global CA maps from Definition~\ref{def:CA} are exactly the classical global CA maps (see for instance \cite{cagroups}).
\end{remark}

Definition~\ref{def:CA} explains how a given CA local rule $f$ induces a CA global map $F_{G,f}$ on a given graph. $F_{G,f}$ is the main object of study in CA theory, as it represents a dynamical system.

\begin{example}[Two definitions of Game of Life]
  Consider the famous Game of Life CA ${F:\{0,1\}^{\Z^2}\to\{0,1\}^{\Z^2}}$ \cite{life,Gardner_1970}. In this example $\Sigma$ is a singleton and thus ignored to simplify notations.
  First, let $G_1$ be the Cayley graph of $\Z^2$ with generators ${n=(0,1)}$, ${e=(1,0)}$ and their inverses. Let $M$ be the following set of words in $\Delta^{\leq 2}$: $n$, $n^{-1}$, $e$, $e^{-1}$, $ne$, $ne^{-1}$, $n^{-1}e$, $n^{-1}e^{-1}$. Now define the local rule $f_1$ of radius $2$ and using $4$-capped multiset by
  \[f_1(\mu) =
    \begin{cases}
      1 & \text{ if $\mu(1,\epsilon)=0$ and }\sum_{w\in M}\mu(1,w) = 3\\
      1 & \text{ if $\mu(1,\epsilon)=1$ and } 2\leq \sum_{w\in M}\mu(1,w) \leq 3\\
      0 & \text{ otherwise.}
    \end{cases}
  \]
One can check that ${F_{G_1,f_1}= F}$.
Now consider the undirected and unlabeled graph $G_2=(\Z^2,E)$ with ${((i,j),(i',j'))\in E}$ if ${|i-i'|\leq 1}$ and ${|j-j'|\leq 1}$ and ${(i,j)\neq (i',j')}$. Here we denote $\Delta=\{u\}$. We then define another local rule $f_2$ of radius $1$ and using $4$-capped multiset as follows:
\[f_2(\mu) = 
    \begin{cases}
      1 & \text{ if $\mu(1,\epsilon)=0$ and } \mu(1,u) = 3\\
      1 & \text{ if $\mu(1,\epsilon)=1$ and } 2\leq \mu(1,u) \leq 3\\
      0 & \text{ otherwise.}
    \end{cases}
  \]
One can again check that ${F_{G_2,f_2} = F}$.
\end{example}

\subparagraph*{Logics.}
\newcommand\relsp[1]{\,#1\,}

To make the exposition more concise, we suppose some familiarity with standard concepts of formal logic (variables, assignments, quantification, free variables, etc).
MSO logic uses first-order variables (usually denoted by lower-case letters) representing vertices and second-order variables (usually denoted by upper-case letters) representing sets of vertices.
To help reading, relations in formulas will use infix notation (${x\relsp R y}$) while relation in the meta-language will use the set notation (${(x,y)\in R}$).

\begin{definition}[MSO formulas and their semantics]
  The set MSO formulas over label sets $(\Sigma,\Delta)$ is the set of atomic formulas:
  \begin{itemize}
  \item ${x\relsp{L} \sigma}$ for $x$ a first-order variable and ${\sigma\in\Sigma}$ (meaning $x$ has label $\sigma$),
  \item ${x \relsp{E_\delta} x'}$ for $x$ and $x'$ first-order variables and ${\delta\in\Delta}$ (meaning that there is an edge labeled $\delta$ from $x$ to $x'$),
  \item ${x=x'}$ for first-order variables $x$ and $x'$ (meaning that $x$ is equal to $x'$),
  \item ${x\in X}$ for first-order variable $x$ and second-order variable $X$ (meaning that $x$ belongs to set $X$),
  \end{itemize}
  closed by the usual logic connectives ($\vee$, $\wedge$, $\neg$) and quantifiers ($\forall$, $\exists$).
  Given an MSO formula $\Psi$, a ${(\Sigma,\Delta)}$-labeled graph $G$ and an assignment $\alpha$ of free variables of $\Psi$ we define the semantics in the standard way starting from the obvious meaning of atomic formula above (see \cite{courcellebook} for an in-depth introduction). We write ${(G,\alpha)\models \Psi}$ when $\Psi$ is true on $G$ with assignment $\alpha$. If $\Psi$ has no free variable, we simply write ${G\models \Psi}$ when $\Psi$ is true on $G$.
\end{definition}

We will sometimes use substitution of relations with formulas defining them. For instance we can write ${\Psi(X,\Psi_R(x_1,x_2))}$, where $\Psi$ is a formula using an additional relation symbol $R$, to denote the MSO formula obtained by substituting $\Psi_R$ for $R$ in $\Psi$ (with the usual precaution of renaming variables if necessary, see \cite{courcellebook}).

We now define FO logic over orbits of CA: they are just formulas allowing quantifications over configurations and using two relations, equality and application of one step of the CA global rule.

\begin{definition}[FO formulas and their semantics]
  The set of FO formulas is made of atomic formulas:
  \begin{itemize}
  \item ${y = y'}$ (meaning that configuration $y$ is equal to configuration $y'$),
  \item ${y\to y'}$ (meaning that the global CA map leads to $y'$ in one step starting from $y$),
  \end{itemize}
  and closed under the usual logic connectives and quantifiers. Given a FO formula $\phi$, a CA global map ${F:S^V\to S^V}$ and an assignment ${\beta}$ of free variables of $\phi$ to configurations from ${S^V}$, we write ${(F,\beta)\models \phi}$ to denote that $F$ satisfies $\phi$ with assignment $\beta$ following the obvious semantics of formulas starting from the relations above. When $\phi$ has no free variable, we just write ${F\models\phi}$.
\end{definition}

For a CA global map ${F:S^V\to S^V}$ and a FO formula ${\phi}$ with free variables ${(y^1,\ldots,y^n)}$, we use the shortcut ${F\models \phi(c^1,\ldots,c^n)}$ for configurations $c^1,\ldots, c^n\in S^V$ to express that ${(F,\beta)\models \phi}$ where $\beta$ is the assignment given by ${y^i\mapsto c^i}$ for ${1\leq i\leq n}$. We will also use the FO shortcut
\newcommand\stepsdiff[1]{\to^{#1}_{\neq}}
$y^0 \stepsdiff{k} y^k$ to represent the FO formula expressing that $y^0$ leads to $y^k$ in $k$ steps and the $k+1$ configurations involved in this partial orbit are pairwise different:
\[y^0 \stepsdiff{k} y^k\isbydef \exists y^1,\ldots, \exists y^k : \bigwedge_{0\leq i<k}y^i\to y^{i+1} \wedge \bigwedge_{i\neq j}y^i\neq y^j.\]

\emph{Notation convention:} we will always use letter $\Psi$ for MSO formulas, $x$ or $X$ for MSO variables, $\alpha$ for MSO assignments, $\phi$ for FO formulas, $y$ for FO variables, $G$ for graphs, $f$ for CA local rules, $F$ for CA global maps and $c$ for configurations. We use notation $c_v$ to denote state of configuration at vertex $v$, that's why we prefer the exponent notation $c^1, c^2,\ldots$ to denote several configurations.

\subparagraph*{Combining graphs, CA and logics.}

The above definitions suggest various definitions of sets of objects (or languages):
 the graph languages 
 \begin{itemize}
 \item ${\graphs{\Psi} = \{G : G\models \Psi\}}$ and
 \item ${\graphs{\phi,f} = \{G : F_{G,f}\models \phi\}}$,
 \end{itemize}
and the set of CA local rules ${\cas{\phi,G} = \{f : F_{G,f}\models \phi\}}$,
where we use the notation convention above, and where $\Sigma$ and $\Delta$ are fixed so graphs are actually ${(\Sigma,\Delta)}$-graphs and CA local rules are rules for such graphs.
Moreover, $\cas{\phi,G}$ can be seen as decision problems where inputs are given as local maps of CA (model checking problem of $\phi$ on $G$).

\newcommand\domino{\mathcal{D}} 
\newcommand\turingequiv{\equiv_T}

\section{Translation results}
\label{sec:translation}

\subsection{From FO/CA pairs to MSO}

Whatever the state set, a CA configuration can be represented as a tuple of vertex sets: we can code the state at a vertex by the number of sets it belongs to among the tuple. This way, FO variables can easily be translated into tuples of second-order MSO variables undergoing the same quantification and  we get an onto map from possible assignments of the tuple of second-order MSO variables onto possible assignments of the corresponding FO variable.

Under that coding, equality of configurations translates into a simple MSO formula with just one universal first-order quantifier. It remains to show that the other relation in the signature of FO, relation $\to$ which represents the application of one step of the CA global map, can also be translated into MSO: 
this boils down to checking that at each vertex the local rule is correctly applied, which itself boils down to counting up to some constant occurrences of states that can be reached by a labeled walk of bounded length.

\begin{theorem}\label{theo:FOCAtoMSO}
  There is a recursive translation $\tau$ from pairs ($\phi$,$f$) made of a FO formula $\phi$ and a CA local rule $f$ to MSO formulas such that the following equivalence holds for any graph $G$: ${F_{f,G}\models\phi\iff G\models \tau(\phi,f)}$.
\end{theorem}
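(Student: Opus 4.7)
Following the hint in the paragraph preceding the theorem, the strategy is to encode each configuration by finitely many set-variables and translate FO-over-orbits into MSO-over-graphs by structural induction. Fix the state set $S$, radius $r$ and cap $k$ attached to $f$. A configuration $c\in S^V$ is represented by a tuple $\vec{Y}=(Y_s)_{s\in S}$ of second-order variables with the intended meaning $v\in Y_s\iff c_v=s$, and the partition constraint
\[\mathrm{Part}(\vec{Y})\isbydef \forall v\,\bigvee_{s\in S}\Bigl(v\in Y_s\wedge\bigwedge_{s'\neq s}\neg(v\in Y_{s'})\Bigr)\]
makes the map from assignments of $\vec{Y}$ satisfying $\mathrm{Part}$ to configurations in $S^V$ a bijection. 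Accordingly, each FO variable $y^i$ gets a fresh tuple $\vec{Y}^i$, FO connectives commute with $\tau$, and $\exists y^i\,\phi$ is translated as $\exists \vec{Y}^i\,(\mathrm{Part}(\vec{Y}^i)\wedge\tau(\phi))$ (dually for $\forall$). This gives the recursion on $\phi$ for free modulo the two atomic cases.

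\textbf{Translating the atoms.} Equality $y=y'$ becomes $\bigwedge_{s\in S}\forall v\,(v\in Y_s\leftrightarrow v\in Y'_s)$. The only substantial atom is $y\to y'$. Unfolding Definition~\ref{def:CA}, it holds iff at every vertex $v$, applying $f$ to the label of $v$ and to the $k$-capped pattern at $v$ in the configuration coded by $\vec{Y}$ yields the state assigned to $v$ by $\vec{Y}'$. Since $\Sigma$, $S$, $\Delta^{\leq r}$ and $k$ are all finite, the domain of $f$ is finite, so the translation of $y\to y'$ has the schematic form
\[\forall v\,\bigwedge_{(\sigma,m)\in\Sigma\times\mulset^k(\Delta^{\leq r}\times S)}\Bigl(\chi_{\sigma,m}(v,\vec{Y})\rightarrow v\in Y'_{f(\sigma,m)}\Bigr),\]
where $\chi_{\sigma,m}(v,\vec{Y})$ is an MSO formula holding exactly when $v$ has label $\sigma$ and the $k$-capped pattern of radius $r$ at $v$ in $\vec{Y}$ equals $m$.

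\textbf{The main obstacle: building $\chi_{\sigma,m}$.} The label part is the atomic MSO formula $v\relsp{L}\sigma$. For the pattern, I would, for each word $w=w_1\cdots w_\ell\in\Delta^{\leq r}$, first express ``$v'$ is reachable from $v$ by a path labeled $w$'' by a standard MSO formula $\mathrm{reach}_w(v,v')$ with $\ell-1$ first-order quantifiers over intermediate vertices linked by $E_{w_i}$. From this, ``there are at least $j$ pairwise distinct $v'$ with $\mathrm{reach}_w(v,v')\wedge v'\in Y_s$'' uses $j$ existential first-order quantifiers and a pairwise-distinctness conjunction; for $j<k$ one obtains the ``exactly $j$'' variant by conjoining with the negation of the ``at least $j+1$'' version, while for $j=k$ one keeps ``at least $k$''. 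Conjoining the correct variant across the finite set of pairs $(w,s)$ yields $\chi_{\sigma,m}$. Correctness of $\tau$ then follows by straightforward induction on $\phi$, the case $y\to y'$ being exactly where $\chi_{\sigma,m}$ matches the definition of $F_{G,f}$; effectiveness is clear since the recursion needs only the finite table of $f$ and its parameters $r,k$. The genuine technical point is thus the bounded-counting-of-walks construction of $\chi_{\sigma,m}$; everything else is syntactic bookkeeping.
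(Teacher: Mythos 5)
Your proposal is correct and follows essentially the same route as the paper's proof: configurations are coded by tuples of second-order variables, equality becomes a pointwise check, and the relation $y\to y'$ is translated by locally verifying the rule via bounded counting of states reachable along labeled walks of length at most $r$. The only (immaterial) differences are that you use a bijective partition coding guarded by $\mathrm{Part}$ and relativized quantifiers with structural induction, whereas the paper uses an onto counting map $\iota$ (state $=$ number of sets containing the vertex) and induction over a prenex quantifier prefix.
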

\newcommand\interpret{\iota}

\begin{proof}
  Fix $\phi$ and $f$ with state set $S=\{1,\ldots,n\}$, radius $r$ and using $k$-capped multisets. Suppose that $\phi$ is in prenex normal form where ${\phi^0(y^1,\ldots,y^m)}$ is the quantifier-free part (if not just compute this prenex normal form).
We will use the following map ${\interpret}$  from $n$-tuple of vertex sets ${\subseteq V}$ to configurations of ${S^V}$:
  \[\interpret(E_1,\ldots, E_n) = v\mapsto \#\{i : 1\leq i\leq n \wedge v\in E_i\}.\]
  $\iota$ is an onto map. Following this coding we define the MSO formula $\Psi_{s-\mathtt{state}}(x,\overline{X})$ expressing that the configuration represented by $\overline{X}$ at vertex $x$ is in state $s$:
  \[\Psi_{s-\mathtt{state}}(x,\overline{X})\isbydef \bigvee_{
        \overset{A\subseteq S}{|A|=s}} \bigwedge_{1\leq i\leq n} x\in X_i\iff i\in A.\]
  
For each relation ${y^1 \relsp{R}y^2}$ in the FO signature, we want to define an MSO formula 
$\Psi_R(\overline{X^1},\overline{X^2})$ where $\overline{X^i}$ are $n$-tuples, and with the following property: for any assignment $\alpha$ of this two $n$-tuples, and for any graph $G$, the assignment $\beta$ which assigns ${\interpret(\alpha(\overline{X^i}))}$ to $y^i$ verifies:
\[(G,\alpha)\models\Psi_R(\overline{X^1},\overline{X^2})\iff (F_{G,f},\beta)\models y^1\relsp{R} y^2.\]

First it is not difficult to write such a formula in the case of relation ${y^1 \relsp{=} y^2}$: \[\Psi_=(\overline{X^1},\overline{X^2}) \isbydef \forall x, \bigvee_{s\in S}\Psi_{s-\mathtt{state}}(x,\overline{X^1})\wedge \Psi_{s-\mathtt{state}}(x,\overline{X^2})\]
For the case of relation ${y^1\relsp{\to} y^2}$, we first need for each ${w\in\Delta^l}$ an MSO formula $\Psi_{w}(x_0,x_{l})$ expressing that vertex $x_l$ can be reached by a walk labeled $w$ starting from label $x_0$, that can be written as:
\[\Psi_{w}(x_0,x_l)\isbydef \exists x_1,\ldots, x_{l-1} : \bigwedge_{0\leq i<l}x_i\relsp{E_{w_i}} x_{i+1}.\]
Then we define the formula ${\Psi_{\patt(s,w)\geq p}(\overline{X},x)}$ for each ${0\leq p<k}$ by
\[\Psi_{\patt(s,w)\geq p}(\overline{X},x)\isbydef \exists x_1,\ldots, x_p : \bigwedge_{\overset{1\leq i,j\leq p}{i\neq j}}x_i\neq x_j \wedge \bigwedge_{1\leq i\leq p}\Psi_w(x,x_i) \wedge \Psi_{s-\mathtt{state}}(x_i,\overline{X})\]
which expresses that in configuration $\overline{X}$, there are at least $p$ occurrences of state $s$ reachable by a walk labeled $w$ starting from $x$.
We can write by Boolean combination of the above a formula $\Psi_{\patt(s,w)=p}$ expressing that in configuration $\overline{X}$, there are exactly $p$ occurrences of state $s$ reachable by a walk labeled $w$ starting from $x$.

We can now define the formula $\Psi_{\to}$ with the property announced above with respect to the FO relation $y^1\relsp{\to} y^2$ :
\[\Psi_{\to}(\overline{X^1},\overline{X^2})\isbydef \forall x,\bigwedge_{\overset{\sigma\in\Sigma}{\mu\in M}}\bigl(x\relsp{L}\sigma \wedge \bigwedge_{\overset{s\in S}{w\in\Delta^{\leq r}}}\Psi_{\patt(s,w)=\mu(s,w)}(\overline{X},x)\bigr)\Rightarrow \Psi_{f(\sigma,\mu)-\mathtt{state}}(x,\overline{X^2})\]
where ${M=\mulset^k\bigl(\Delta^{\leq r}\times S\bigr)}$. This formula is just the translation of the definition of a CA global rule from its local rule (Definition~\ref{def:CA}) saying that at each vertex, if the local symbol and multiset pattern are ${(\sigma,\mu)}$, then the next state should be ${f(\sigma,\mu)}$.

With these ingredients, $\phi^0(y^1,\ldots, y^m)$, the quantifier free part of $\phi$, is translated into an MSO formula $\Psi^0(\overline{X^1},\ldots,\overline{X^m})$ by replacing each occurrence of some atomic formula $y^i\relsp{=} y^j$ (resp. $y^i\relsp{\to} y^j$) by $\Psi_{=}(\overline{X^i},\overline{X^j})$ (resp. $\Psi_{\to}(\overline{X^i},\overline{X^j})$). We still have that, for any MSO assignment $\alpha$, the FO assignment $\beta$ defined with a slight abuse of notations as $\iota\circ\alpha$ as above verifies:
\[(G,\alpha)\models\Psi^0(\overline{X^1},\ldots,\overline{X^m})\iff (F_{G,f},\beta)\models \phi^0(y^1,\ldots, y^m).\]

Since $\iota$ is an onto map, one has also that
\[(G,\alpha')\models\forall\overline{X^1},\Psi^0(\overline{X^1},\ldots,\overline{X^m})\iff (F_{G,f},\beta')\models \forall y^1,\phi^0(y^1,\ldots, y^m)\]
  for any MSO assignment $\alpha'$ and corresponding FO assignment $\beta'$ defined as $\iota\circ\alpha'$ (with a slight abuse of notations).
The same hold with the $\exists$ quantifier and the Theorem follows by a straightforward induction.
\end{proof}

\subsection{From MSO to FO/CA pairs.}

This converse translation is less straightforward. Let us first give a simplified overview by considering an MSO formula $\Psi$ in prenex normal form, and describing its translation into a CA local rule $f$ together with a FO formula $\phi$.

We will use binary configurations (\textit{i.e.} elements of ${\{0,1\}^V}$) to code either second-order variable assignments (a set coded by its indicator function) or first-order variable assignments (a singleton coded by its indicator function).
More generally, we can code several variable assignments in a configuration made of several binary components, \textit{i.e.} configurations over a product alphabet ${S=\{0,1\}\times\cdots \times\{0,1\}}$.
Given a configuration made of a product of binary components coding an assignment of several variables, the truth of an atomic formula using these variables over this assignment can be checked by a CA local rule in a distributed manner. With slightly more work and using a particular FO property of orbits, we can actually test any quantifier free formula in a distributed manner.

The CA local rule together with the FO formula we are going to construct will essentially enforce an erasing process that starts from a configuration made of a product of binary components (to code an assignment of all MSO variables at once) and then removes components of the product one by one at successive steps until reaching a fixed point: on one hand having all information about variables assignment at the start allows to check the truth of the quantifier-free matrix of the MSO formula as hinted before, and, on the other hand, having components to disappear individually in successive steps allows to make a FO quantification over configurations following exactly the MSO quantification over variables. 

For instance, taking MSO formula ${\Psi=\forall X_1,\exists x_2, \forall X_3, R(X_1,x_2,X_3)}$, we construct a FO formula that is essentially of the form:
\[\forall y_1, \exists y_2, \forall y_3 : y_3\to y_2\to \begin{tikzpicture}[anchor=base, baseline]
    \node (a) {$y_1$};
    \draw (a) edge[loop right] (a);
  \end{tikzpicture}
\]
and a CA local rule that will ensure that $y_1\approx a_1$, $y_2\approx (a_1,a_2)$ and ${y_3\approx (a_1,a_2,a_3)}$ where $a_1$ is an assignment for $X_1$, $a_2$ is an assignment for $x_2$ and $a_3$ is an assignment for $X_3$.
It will also ensure, when in configuration $y_3$, that the assignments ${(a_1,a_2,a_3)}$ satisfy ${R(X_1,x_2,X_3)}$.
It is important to note that successive choices of assignments of variables $y_1$, $y_2$ and $y_3$ corresponds, up to a simple product encoding, to successive choices of assignments for variables $X_1$, $x_2$ and $X_3$.  
Non-deterministic choices of successive assignments are possible in this construction because they correspond to going backward in time in the canonical orbit enforced by the FO formula above: there is no contradiction with the determinism of cellular automata.

To turn this overview into a concrete construction, several technical points have to be addressed:
\begin{itemize}
\item the simplified behavior described above only works on some well-formed configuration; as usual in CA constructions, we will use local error detection and special error states to mark orbits of bad configurations and distinguish them from good ones: here we use two error states that oscillate with period two in order to ensure that any orbit reaching a fixed point has successfully passed all error detection mechanisms.
\item to code first-order variables, binary configurations need to have exactly one vertex in state 1 and the local nature of CA prevents from verifying this (it cannot a priori distinguish a configuration with a single 1 from a configuration with two 1s arbitrarily far away, not to mention the case of non-connected graphs). Our construction handles this through the FO formula to be satisfied using a sibling configurations counting trick combined with a particular behavior of the CA which uses additional layers of states.

  The erasing process of the CA mentioned above will therefore take several steps for first-order variables, and only one step for second-order variables.
\item checking a quantifier-free formula given an assignment of its variables encoded in a configuration is generally not doable in one step by a CA, especially on non-connected graphs that prevent the CA from communicating between components (think of the example: ${x\in X\vee y\in X}$) ; to solve this problem our construction will once again rely on a combination of FO logic over several steps and a particular behavior of the CA. 
\end{itemize}
We first give a solution to these technical problems that works when we restrict to connected graphs. This construction is a little simpler than the general case that we address later, and it has the benefit to induce a better controlled  dependence of the FO formula on the MSO formula (an aspect that will turn out to be useful in Section~\ref{sec:coros}).

If $\Psi$ is an MSO formula in prenex normal form with quantifier prefix ${Q_1,\ldots, Q_n}$, its prefix signature is the word describing the alternance of quantifiers taking into account both the type of quantification and the order of quantified variables. More precisely, it is the word over alphabet ${\{\forall,\exists\}\times\{1,2\}}$ obtained as follows: first map each quantifier to the alphabet according to the actual type and order, then remove any repetition of consecutive identical letters.

\begin{theorem}\label{theo:MSOFOCACON}
  There are two recursive transformations $\tau_{FO}$ from MSO formulas to FO formulas and $\tau_{CA}$ from MSO formulas to CA local rules such that, for any MSO formula $\Psi$, the pair made of ${\phi=\tau_{FO}(\Psi)}$ and ${f=\tau_{CA}(\Psi)}$ verifies:
  \begin{enumerate}
  \item for any connected graph $G$ the following equivalence holds: ${G\models\Psi\iff F_{G,f}\models \phi}$,
  \item if $\Psi$ is prenex then $\phi$ depends only on its prefix signature.
  \end{enumerate}
\end{theorem}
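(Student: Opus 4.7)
My plan is to put $\Psi$ in prenex normal form $Q_1 x_1 \cdots Q_n x_n\, \phi^0$ and to construct $(f,\phi)$ so that the CA implements an \emph{erasing dynamics} on encodings of variable assignments, while $\phi$ traverses the orbit backwards in time following the quantifier prefix. I would take as state set a product of binary components, one per second-order variable and (possibly several, to allow a multi-step handling) per first-order variable, together with auxiliary control layers, a fixed-point state $\top$, and two oscillating error states $\bot_0,\bot_1$ swapping at every step (hence never fixed points). A well-formed configuration then codes at each vertex the indicator of each still-live variable plus a tag recording which variables are live.

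The local rule $f$ would remove, at each step, the innermost live variable: erasing a second-order variable takes one step (its bit is forgotten, and its contribution to atomic subformulas of $\phi^0$ is first frozen into a Boolean accumulator), while erasing a first-order variable takes a fixed constant number of steps, during which (i) the ``exactly one $1$'' check is prepared via auxiliary layers that let the FO formula exploit a sibling-counting trick, and (ii) the truth values of atomic formulas involving that variable are propagated through the graph. Once every variable has been erased, the rule stabilises at $\top$ if the accumulated value of $\phi^0$ is true, and otherwise lands on the $\bot_0\leftrightarrow\bot_1$ cycle; any local inconsistency along the way also sends the offending orbit into this cycle.

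The FO formula $\phi$ would then be, for prefix signature $Q'_1\cdots Q'_m$ with step counts $r_i$ prescribed by the construction,
\[\phi \isbydef Q'_1 y^1 \cdots Q'_m y^m\, \exists y^0\,\bigl(y^m \stepsdiff{r_m} y^{m-1} \stepsdiff{r_{m-1}}\cdots\stepsdiff{r_1}y^0\wedge y^0\to y^0\bigr),\]
where consecutive quantifiers of identical type in the MSO prefix collapse into a single FO quantifier because a single configuration can code the entire block at once. The condition $y^0\to y^0$ rules out the oscillating error and forces a $\top$-fixed point, so $\phi$ holds iff there exists a successful unfolding of the quantifier tree for $\Psi$ on $G$; correctness is proved by induction on the prefix. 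Point~(2) then follows by inspection, since the shape of $\phi$ depends only on the signature and on the $r_i$, which are themselves fixed constants depending only on the quantifier types.

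I expect the main obstacles to be the two technical points singled out in the informal sketch. For first-order variables, the CA cannot locally check the singleton constraint on indicator configurations, so the construction must force non-singleton branches into the oscillating error via the FO formula; this requires an interplay between auxiliary CA layers that make the number of distinct predecessors of a given configuration reflect the number of $1$s present, and the use of $\stepsdiff{r}$ (pairwise distinct intermediates) so that the FO quantifications can detect this multiplicity. Aggregating the truth of the matrix $\phi^0$ despite non-local atomic facts is the second delicate point, but the restriction to connected graphs makes a bounded-rate in-CA propagation suffice, which is why a cleaner construction than the general case can be given here.
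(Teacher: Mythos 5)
Your overall architecture --- an erasing dynamics on layered encodings of assignments, oscillating error states to exclude bad orbits from reaching a fixed point, a sibling/preimage-counting trick to enforce the singleton constraint on first-order variables, and an FO formula that walks the orbit backwards following the (block-collapsed) quantifier prefix --- is essentially the paper's construction. The genuine gap is in how you evaluate the quantifier-free matrix $\phi^0$. You propose to freeze the truth of atomic subformulas into a Boolean accumulator and let a ``bounded-rate in-CA propagation'' aggregate them, arguing that connectivity makes this suffice. It does not: the length of the orbit that your FO formula inspects is a constant depending only on $\Psi$, so information inside the CA can travel only a bounded distance, while a connected graph has unbounded diameter. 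A clause such as $x\in X \vee x'\in X'$ whose two disjuncts are witnessed at arbitrarily distant vertices can therefore never be decided by the CA within the allotted steps. The paper's solution is structurally different: it puts the matrix in disjunctive normal form, adds one extra orbit step in which a clause index $j$ is chosen uniformly over the graph (this is where connectivity is actually used, via local agreement of neighbouring states of type $j$-truth-check), lets the FO formula existentially quantify over this preimage, and has the CA check each \emph{term} of the chosen clause purely locally at the unique vertex assigned to its first-order variable, terms being true by default elsewhere. Only conjunctions distribute over ``at every vertex,'' which is why the disjunction over clauses must be lifted into the FO layer rather than propagated inside the CA.

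A second, repairable, defect: your displayed formula places the entire chain condition as a conjunct under the innermost $\exists y^0$. For a prefix beginning with $\forall$, a garbage assignment of $y^1$ (say, everywhere in an error state) belongs to no chain reaching a fixed point, so the inner formula is falsified for that $y^1$ and $\phi$ fails even when $\Psi$ holds. The domain restrictions must instead be distributed per quantifier as guards ($\Rightarrow$ for $\forall$, $\wedge$ for $\exists$), which is how the paper phrases it with relativized quantifiers; with that change your formula still depends only on the prefix signature, so point~(2) survives.
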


Let 
${\Psi = \quantif_1x_1^1,\ldots, \quantif_1x_1^{k_1},\quantif_2x_2^1,\ldots, \quantif_2x_2^{k_2},\ldots, \quantif_nx_n^1,\ldots, \quantif_nx_n^{k_n}, R(x_1^1,\ldots,x_n^{k_n})}$
be any MSO formula in prenex normal form where $\quantif_1,\ldots, \quantif_n$ are the $n$ quantifiers types (either $\forall$ or $\exists$ and either first or second-order) forming the prefix signature, variables $x_i^1,\ldots, x_i^{k_i}$ are bound by quantifier of type $\quantif_i$ and $R(x_1^1,\ldots,x_n^{k_n})$ is the matrix of the prenex normal form (i.e. a quantifier free formula). We use this numbering of variables grouped by quantifiers type to obtain a more compact FO formula that only depends on the prefix signature of $\Psi$.
Let's write $R(x_1^1,\ldots,x_n^{k_n})$ in disjunctive normal form:
\[R(x_1^1,\ldots,x_n^{k_n}) = \bigvee_{1\leq j\leq d} C_j(x_1^1,\ldots,x_n^{k_n})\]
where each clause $C_j$ is a conjunction of terms which are atomic formula or negation thereof using variables ${x_1^1,\ldots,x_n^{k_n}}$.
\newcommand\foind{\mathcal{O}}
\newcommand\llayers{\lambda}
Let ${\foind}$ be the set of $i$ such that $\quantif_i$ is a first-order quantifier.

\subparagraph*{Structure of configurations.}

For each ${1\leq i\leq n}$, let ${\omega(i)=\bigl|\foind\cap \{1,\ldots,i\}\bigr|}$ and define ${\llayers(i) = i + 2\cdot \omega(i)}$. As it will become clear below, $\llayers(n)$ denotes the length of an orbit along which $n$ particular configurations will be identified. Configurations along this orbit will use distinct state sets, and $\llayers(i)$ will also denotes the number of layers of the $i$-th configuration. We first introduce sets $S_l$ for ${1\leq l\leq \llayers(n)}$ that will be used to hold variable assignments and translate MSO quantification over variables of first or second-order into FO quantification over configurations. $S_l$ is a product of $l$ \emph{layers} each of the form ${\{0,1\}^{k_i}}$ (\emph{variable} layer) or ${\{1,\ldots,k_i\}}$ (\emph{choice} layer) for some $i$, or $\{0,1\}$ (\emph{control} layer). Intuitively, variable layers will hold MSO variables assignments, and choice and control layers are used only for first-order variables as a control mechanism. Sets $S_l$ are precisely defined as follows:
\begin{itemize}
\item $S_1 = \{0,1\}^{k_1}$ and if $1\in\foind$ then $S_2=S_1\times\{1,\ldots,k_1\}$ and $S_3=S_2\times\{0,1\}$,
\item for ${1\leq i<n}$, $S_{\llayers(i)+1}=S_{\llayers(i)}\times\{0,1\}^{k_{i+1}}$ and if ${i+1}\in\foind$ then $S_{\llayers(i)+2}=S_{\llayers(i)+1}\times\{1,\ldots,k_{i+1}\}$ and $S_{\llayers(i)+3}=S_{\llayers(i)+2}\times\{0,1\}$.
\end{itemize}

 If ${j=\llayers(i) \leq l}$ with $i\not\in\foind$ or ${j=\llayers(i)-2\leq l}$ with $i\in\foind$ then the $j$-th layer of $S_l$ is a \emph{variable} layer, denoted as ${V_i(S_l)}$, and intuitively represents an assignment for the tuple of variables ${x_i^1,\ldots, x_i^{k_i}}$. For ${1\leq j\leq k_i}$, we denote by ${V_i^j(S_l)}$ the $j$-th binary component of ${V_i(S_l)}$ which intuitively represents an assignment for variable $x_i^j$. For $i\in\foind$ and ${j=\llayers(i)-1\leq l}$, the $j$th layer of $S_l$ is a choice layer, denoted $\chi_i(S_l)$. Other layers, precisely $j$-th layers with ${j=\llayers(i)}$ with $i\in\foind$, are \emph{control} layers and denoted $K_i(S_l)$. Choice layer $\chi_i$ together with control layer $K_i$ are used to ensure that the corresponding variable layer $V_i$ correctly encodes a $k_i$-tuple of assignments of first-order variables, \textit{i.e.} is a $k_i$-tuple of binary configurations each having exactly one position in state $1$. 

 We denote by $\pi$ the natural projection from ${S_{l}}$ onto ${S_{l-1}}$ (for any ${2\leq l\leq \llayers(n)}$) which removes the last ($l$-th) layer of elements of $S_l$.

A well-formed configuration where all vertices are in a state from $S_{\llayers(n)}$ intuitively contains an assignment for all variables involved in formula $\Psi$ (provided the control mechanism for first-order variables to be detailed below has been successful). We need to implement a distributed check of the truth  of quantifier-free formula $R$ on such an assignment. The key is to ensure that some clause $C_j$ from the disjunctive normal form of $R$ is chosen uniformly on the entire graph and to check everywhere that each terms of $C_j$ is locally correct given the assignment. 
For ${1\leq j\leq d}$, let ${T_j = S_{\llayers(n)}\times\{j\}}$ (recall that $d$ is the number of clauses in the disjunctive normal form of $R(x_1^1,\ldots, x_n^{k_n})$). We again use notation $\pi$ to denote the natural projection from ${T_j}$ onto $S_{\llayers(n)}$, which removes the last component of states. ${T_j}$ sates will be used to check clause $C_j$.

We can now define the state set of the CA ${\tau_{CA}(\Psi)}$ as
\[S = \{e_0,e_1\}\cup \bigcup_{1\leq l\leq \llayers(n)} S_l \cup \bigcup_{1\leq j\leq d} T_j\]
where $e_1$ and $e_2$ are distinct elements from the rest of $S$. We say that the \emph{type} of an element of ${S}$ is $l$ if it belongs to $S_l$, \emph{error} if it is $e_0$ or $e_1$ and \emph{$j$-truth-check} if it belongs to ${T_j}$. We also naturally extend the notation $V_i$, $\chi_i$  and $K_i$ for any state $s\in T_j$ by ${V_i(s) = V_i(\pi(s))}$,  ${\chi_i(s)=\chi_i(\pi(s))}$ and ${K_i(s)=K_i(\pi(s))}$. 

A configuration ${c\in S^V}$ is \emph{valid} if the following conditions hold:
\begin{itemize}
\item states of all pairs of neighboring vertices of $G$ are of the same type, and not of error type;
\item choice layer $\chi_i$ of all pairs of neighboring vertices of $G$ are equal;
\item at each vertex $v$ the control layers have zeros where the corresponding variable layers indicated by the choice layers have, precisely: ${K_i(c_v)\leq V_i^{\chi_i(c_v)}(c_v)}$ for all $i$ such that ${K_i(c_v)}$ is defined (intuitively, a $1$ in a control layer is authorized only if their is a $1$ in the 'chosen' component of the corresponding variable layer);
\end{itemize}
\newcommand\locvalid{\texttt{valid}}
Note that this definition of validity is purely local. For $\mu$ a capped multiset (second argument of the local rule of a CA), we write ${\locvalid(\mu)}$ to express the local validity conditions above on $\mu$: the first two item are checked on each pair of states ${(s,s')}$ such that ${\mu(s,\epsilon)\geq 1}$ and ${\mu(s',\delta)\geq 1}$ for some ${\delta\in\Delta}$; the third item is checked on state $s$ such that ${\mu(s,\epsilon)\geq 1}$.

\subparagraph*{CA local rule.}
The behavior of the CA local rule ${f=\tau_{CA}(\Psi)}$ can intuitively be described as follows:
\begin{itemize}
\item check the local validity of the configuration and if not generate states of error type that alternate with period $2$ (between $e_0$ and $e_1$),
\item apply projection $\pi$ on states of type $l$ with ${l\geq 2}$ and let states of type $1$ unchanged,
\item on states of type ${j}$-truth-check, verify that the assignment of variables ${x_1^1,\ldots, x_n^{k_n}}$ coded in variable layers $V_i^j$ are such that ${C_j(x_1^1,\ldots, x_n^{k_n})}$ holds, and apply $\pi$ if it is the case, or generate an error state otherwise.
\end{itemize}

Of course, the behavior on states of type ${j}$-truth-check above has to be understood locally since we are defining a CA. The implementation of this distributed truth check is as follows: each term ${t(x_a^b,x_p^q)}$ appearing in clause $C_j$ (an atomic formula or its negation), where $x_a^b$ is a first-order variable, is checked only at any vertex having a $V_a^b$ component at $1$, otherwise it is considered true by default.
\newcommand\locallymodels{\models_{loc}}
 More precisely, for a pair ${(\sigma,\mu)}$ made of a vertex label and a capped multiset (arguments of the local rule $f$), we write ${(\sigma,\mu)\locallymodels C_j}$ (clause $C_j$ is locally valid) if the local state (unique $s$ such that ${\mu(s,\epsilon)}$)  is of type ${j}$-truth-check and the neighboring states also, and if all terms of $C_j$ are locally true according to the previous rule.
It turns out that ${(\sigma,\mu)\locallymodels C_j}$ can be checked by a local rule of CA of radius $1$ and using $1$-capped multisets (\textit{i.e.} sets). Precisely, all possible terms of clause $C_j$ are treated as follows (denoting again $s$ the unique state such that ${\mu(s,\epsilon)\geq 1}$):
\begin{itemize}
\item ${x_a^b \relsp L \sigma'}$ (resp. its negation) is true if and only if $V_a^b(s)=0$ or if $\sigma=\sigma'$ (resp. $\sigma\neq\sigma'$),
\item ${x_a^b \relsp{E_\delta} x_p^q}$ (resp. its negation) is true if and only if $V_a^b(s)=0$ or if ${1\in\{V_p^q(q') : \mu(q',\delta)\geq 1\}}$ (resp. $1$ does not belong to this set),
\item ${x_a^b = x_p^q}$ (resp. its negation) is true if and only if $V_a^b(s)=0$ or ${V_p^q(s)=1}$ (resp. ${V_p^q(s)=0}$),
\item ${x_a^b \in x_p^q}$ (resp. its negation) is true if and only if ${V_a^b(s)=0}$ or ${V_p^q(s)=1}$ (resp. ${V_p^q(s)=0}$).
\end{itemize}
For $t$ a term, we write ${(\sigma,\mu)\locallymodels t}$ if $t$ is locally true according to the above definition.
Recall that $x_a^b$ is a first order variable and the rest of the construction will ensure that, on configurations that matter, there will always exist a node at which ${V_a^b(s)=1}$ so these tests will actually check that the assignments of variables encoded in the configuration do satisfy the term as desired.

The CA local rule ${f}$ is then defined as follows (denoting again $s$ the unique state such that ${\mu(s,\epsilon)\geq 1}$):
\[f(\sigma,\mu) =
  \begin{cases}
    e_{1-i} &\text{ if $s=e_i$}\\
    e_0 &\text{ otherwise, and if $\neg\locvalid(\mu)$,}\\
    e_0 &\text{ otherwise, and if $s$ has type ${j}$-truth-check and ${(\sigma,\mu)\not\locallymodels C_j}$,}\\
    s &\text{ otherwise, and if $s$ is of type $1$,}\\
    \pi(s) &\text{ otherwise.}
  \end{cases}
\]




\subparagraph*{FO formula.}
\newcommand\sequ[2]{\texttt{seq}_{#1}(#2)}
\newcommand\sequple[2]{\texttt{good}_{#1}(#2)}
\newcommand\truth[1]{\texttt{truth}(#1)}
\newcommand\preimj[2]{\texttt{preimg}_{#1}(#2)}
\newcommand\siblings[1]{\texttt{\#siblings}(#1)}
\newcommand\okfov[1]{\texttt{goodFOVAR}(#1)}

Most of the task of the FO formula is to check that $n$ configurations are well-positioned in an orbit leading to a fixed-point. However, along this orbit, we also have to make checks to ensure that layers corresponding to first-order variables are well formed. For ${i\in\foind}$, the CA behavior already ensures (by generating error states if not) that choice layers $\chi_i$ are uniform and that control layers $K_i$ are upper-bounded by the chosen corresponding variable layer $V_i^j$. In this context, the check is done as follows (intuitively, variable $y^i$ represents a configuration of type $\llayers(i)$ at each vertex, for some $i\in\foind$):
\[\okfov{y^i}\isbydef \forall y, \forall y', (y^i\stepsdiff{2} y'\wedge y\stepsdiff{2} y')\Rightarrow\siblings{y}=1,\]

where the formula ${\siblings{y}=1}$ expresses that there is exactly $1$ configuration other than $y$ with same image as $y$ and can be written explicitly in FO as follows: \[\exists y_s,\exists y_+, y_s\to y_+\wedge y\to y_+ \wedge y_s\neq y \wedge (\forall y': y'\to y_+ \Rightarrow (y'=y \vee y'=y_s)).\] The idea is that a variable layer $V_i$ is good if, for any choice $j$ made in choice layer $\chi_i$, there are only $2$ possible ways to correctly complete the control layer $K_i$, because there is exactly one vertex $v$ at which ${V_i^j}$ is $1$ and therefore at which $K_i$ can be freely chosen to be $0$ or $1$.

Let us now define formulas to deal with the global structure of the orbit leading to a fixed point:
\[\sequ{1}{y} \isbydef
  \begin{cases}
    y\to y&\text{ if $1\not\in\foind$},\\
    \exists y^0, y\stepsdiff{2} y^0\wedge y^0\to y^0\wedge\okfov{y} &\text{ if $1\in\foind$,}
  \end{cases}
\]
and for any ${2\leq i\leq n}$ 
\[\sequ{i}{y,y^+}\isbydef
  \begin{cases}
    y^+\to y & \text{ if }i\not\in\foind\\
    y^+\stepsdiff{3} y \wedge \okfov{y^+} & \text{ if }i\in\foind\\
  \end{cases}.\]

Then, denote by ${\sequple{i}{y^1,\ldots, y^i}}$ for each ${1\leq i\leq n}$ the formula:
\[\sequple{i}{y^1,\ldots, y^i}\isbydef\sequ{1}{y^1}\wedge\bigwedge_{2\leq k\leq i}\sequ{k}{y^{k-1},y^k}.\]

The truth check for $R$ has to make a non-deterministic choice of clause $C_j$ to then use the distributed truth check implemented in the CA, we therefore define the following formula to be used in $\phi$: ${\truth{y}\isbydef \exists y'\to y}$
which make sense when $y$ represents a configuration everywhere of type ${\llayers(n)}$.

As we show later, formula ${\sequple{i}{y^1,\ldots, y^i}}$ paired with CA $F$ ensures that the configuration assigned to $y^i$ is a well-formed configuration of type $\llayers(i)$ that holds an assignment for variables ${(x_i^1,\ldots, x_i^{k_i})}$ through its variable components ${V_i^1,\ldots, V_i^{k_i}}$. We use these formulas in $\phi$ to make restricted domain FO quantifications that exactly correspond to well-formed configurations that hold assignments of the corresponding MSO variables. 
  
To make the formula $\phi$ more readable, we use the following syntactic sugar to express restricted domain quantification. If $\phi_D$ and $\phi$ are formulas containing $y$ as free variable, then:
\begin{itemize}
\item ${\exists y\in\phi_D, \phi}$ stands for ${\exists y, \phi_D\wedge\phi}$,
\item ${\forall y\in\phi_D, \phi}$ stands for ${\forall y, \phi_D\Rightarrow \phi}$.
\end{itemize}

We can finally define FO formula  $\phi=\tau_{FO}(\Psi)$:

\[\phi\isbydef \quantif_1' y^1\in\sequple{1}{y^1}, \quantif_2' y^2\in\sequple{2}{y^1,y^2},\ldots, \quantif_n' y^n\in\sequple{n}{y^1,\ldots,y^n}, \truth{y^n}\]

where the FO quantifier $\quantif_i'$ is $\exists$ if the MSO quantifier $\quantif_i$ is existential, and $\forall$ if $\quantif_i$ is universal.

\subparagraph*{Correctness of the construction.} First, it can be checked that $\phi$ only depends on the prefix of $\Psi$ and not on $R$. Second, the construction of $f$ and $\phi$ are clearly computable from $\Psi$. The proof of Theorem~\ref{theo:MSOFOCACON} then relies on two lemmas. The first one ensures that ${\sequple{i}{\ldots}}$ predicates correctly translate assignments of FO variables quantified in $\phi$ into assignments of MSO variables quantified in $\Psi$ and conversely. 

\begin{lemma}\label{lem:foquantifassign}
  Let $G$ be a ${(\Sigma,\Delta)}$-labeled graph which is connected. Consider configurations ${c^1,\ldots, c^i}$ with ${1\leq i\leq n}$ such that ${F_{G,f}\models \sequple{i}{c^1,\ldots, c^i}}$, then the following holds:
  \begin{enumerate}
  \item $c^k$ is of type $\llayers(k)$ at each vertex, for ${1\leq k\leq i}$;
  \item  for ${1\leq l\leq i}$ and ${1\leq j\leq k_l}$, variable component $V_l^j(c^k)$ is the same for all $k$ with ${l\leq k\leq i}$, and it is such that exactly one vertex is in state $1$ when $l\in\foind$ ;
  \item if ${i<n}$, for any assignment $\alpha$ of variables ${(x_{i+1}^1,\ldots,x_{i+1}^{k_{i+1}})}$ there exists $c^{i+1}$ such that \[F_{G,f}\models \sequple{i+1}{c^1,\ldots ,c^{i+1}}\] and ${V_{i+1}^j(c^{i+1}) = \alpha(x_{i+1}^j)}$ for all ${1\leq j\leq k_{i+1}}$.
  \end{enumerate}
\end{lemma}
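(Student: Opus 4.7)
The plan is to induct on $i$, unfolding the definitions of $\sequ{k}{\cdots}$ and $\okfov{\cdot}$ and leveraging three structural facts about the CA $F=F_{G,f}$. First, the error states $e_0,e_1$ strictly oscillate with period two, so any configuration on an orbit that eventually reaches a fixed point must be error-free. Second, the local validity condition combined with connectivity of $G$ forces any valid configuration to have a uniform type across $V$, and each choice layer $\chi_l$ to be globally constant. Third, on valid configurations $f$ acts either as the identity (on type $1$) or as the projection $\pi$ (on higher types), strictly decreasing the type at each step until the fixed points of type $1$ are reached.

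For the base case $i=1$, I would first read off the type of $c^1$ from $\sequ{1}{c^1}$: the fixed point at the end of the (possibly trivial) forward chain must have type $1$, and the chain itself has length $0$ or $2$ depending on whether $1\in\foind$, so $c^1$ has type $\llayers(1)$, giving item (1). For item (2), nontrivial only when $1\in\foind$, I would unfold $\okfov{c^1}$: any $y$ satisfying its hypothesis has $F^2(y)=F^2(c^1)$, and since the two projections preserve all layers below $\chi_1$, $V_1(y)=V_1(c^1)$. A sibling of $y$ can differ from $y$ only in its control layer $K_1$, constrained by validity to satisfy $K_1\leq V_1^{\chi_1(y)}$ pointwise, so the number of siblings equals $2^m-1$, where $m$ is the number of vertices with $V_1^{\chi_1(y)}(c^1)=1$. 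The hypothesis $\siblings{y}=1$ then forces $m=1$, and ranging $y$ over the $k_1$ possible constant values of $\chi_1(y)$ — each realized by an explicit modification of the top two layers of $c^1$, available thanks to connectivity — yields that each variable component $V_1^j(c^1)$ has exactly one vertex in state $1$.

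Item (3) at $i=1$ is by an explicit construction of $c^2$: extend $c^1$ with a new top variable layer $V_2$ encoding $\alpha$ (each second-order variable as its indicator function, each first-order variable as a singleton indicator), and, if $2\in\foind$, add a uniform choice layer and a zero control layer. The validity, strict-projection and sibling-count conditions are then all verified by inspection. The inductive step replays the same analysis one layer higher: $\sequ{i+1}{c^i,c^{i+1}}$ imposes a strict projection chain of length $1$ or $3$ from $c^{i+1}$ down to $c^i$, preserving all lower layers so that item (2) for indices $l\leq i$ transfers from the induction hypothesis, while the $\okfov{\cdot}$ conjunct pins down the new $V_{i+1}$ layer exactly as in the base case.

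The main obstacle will be the sibling-counting argument underlying $\okfov{\cdot}$. Since FO variables range over all of $S^V$, I must verify carefully that the joint hypothesis on $(y,y')$ together with the rigidity provided by validity and the projection dynamics constrains $y$ to coincide with $c^1$ outside its top two layers, and conversely that every admissible constant value of $\chi_1(y)$ is realized by some valid $y$. Connectivity of $G$ is crucial here: without it, different components of $G$ could independently choose their $\chi_l$, breaking both the uniform sibling count and the conclusion that each $V_l^j$ contains exactly one $1$ globally — this is precisely why the present theorem is stated for connected graphs only, the general case being addressed separately in the paper.
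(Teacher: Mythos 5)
Your proposal is correct and follows essentially the same route as the paper's proof: error-freeness of orbits reaching a fixed point forces the pure projection dynamics that pins down the types, the sibling count (your explicit $2^m-1=1$ forcing $m=1$, ranging over the $k_1$ uniform values of the choice layer) handles the first-order layers, connectivity gives uniformity of $\chi_l$, and item (3) is the same explicit completion with uniform choice layer and zero control layer. Your version is merely organized as an induction on $i$ and spells out the $2^m$ preimage count slightly more explicitly than the paper does.
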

\begin{proof}
  For the first item, we have ${F_{G,f}\models \sequ{1}{c^1}}$ so $c^1$ must be a fixed point if ${1\not\in\foind}$ and therefore is of type $1=\llayers(1)$ at each vertex because, on any other type of configuration, their is some vertex not in a state of type $1$ at which $F$ either applies projection $\pi$, or generates an error state. In case ${1\in\foind}$, $F^2(c^1)$ must be a fixed point for the same reason and therefore $c^1$ is of type $\llayers(1)=3$ at each vertex by definition of $F$ since no transition generating an error state can lead to $c^1$ and therefore only $\pi$ can be applied at each vertex.
  More generally, since ${F_{G,f}\models\sequple{i}{c^1,\ldots, c^i}}$ implies ${F_{G,f}\models\sequ{k}{c^{k-1},c^k}}$ for all ${2\leq k\leq i}$, it follows that configuration ${c^k}$ is of type $\llayers(k)$ at each vertex because $c^1$ lies in its orbit (is reached after precisely ${\llayers(k)-\llayers(1)}$ steps) and therefore no error state can appear in this orbit, so only $\pi$ is applied until reaching $c^1$.

  For the second item, consider configuration $c^l$ for ${l\in\foind}$. Since ${F_{G,f}\models\sequ{l}{c^{l-1},c^l}}$ if ${l\geq 2}$ or ${F_{G,f}\models\sequ{l}{c^1}}$ if $l=1$, we in particular have ${F_{G,f}\models\okfov{c^l}}$, so for each configuration $c$ such that ${F^2(c^l)=F^2(c)}$ it holds that $c$ has exactly one sibling configuration. We know that if $c'$ is $c$ or its sibling then it is a valid configuration (because there is no error state in its orbit which reaches the same fixed point as $c^l$), so we have ${K_l(c_v')\leq V_l^{\chi_l(c_v')}(c_v')}$ at each node $v$. We also have that ${\chi_l(c_v')}$ is uniform on the entire graph, because the graph is connected and thus if two vertices hold different values, then there would also be two \emph{neighboring} vertices holding different values which would contradict validity of the configuration. We deduce that there is exactly one vertex $v$ such that ${V_l^{\chi_l(c_v')}(c_v')=1}$ otherwise $c$ would not have exactly one sibling.

  Finally, for the third item, it is enough to complete configuration $c^i$ of type ${\llayers(i)}$ everywhere as shown above to a configuration $c^{i+1}$ of type $\llayers(i+1)$ everywhere which correctly encode assigned values in its $V_{i+1}$ layer (which is completely independent from $c^i$). It is enough to ensure ${F_{G,f}\models \sequ{i+1}{c^i,c^{i+1}}}$ when ${i+1\not\in\foind}$. In the case where ${i+1\in\foind}$, we have to additionally ensure that the choice layer ${\chi_{i+1}}$ is uniform and we can choose the control layer to be everywhere $0$ so that $c^{i+1}$ is indeed valid. Clearly ${F_{G,f}\models \okfov{c^{i+1}}}$ with this choices, so we have ${F_{G,f}\models \sequ{i+1}{c^i,c^{i+1}}}$. In both cases we deduce ${F_{G,f}\models \sequple{i+1}{c^1,\ldots ,c^{i+1}}}$, because by hypothesis we already have ${F_{G,f}\models \sequple{i}{c^1,\ldots ,c^{i}}}$.
\end{proof}

From the above lemma, if ${F_{G,f}\models\sequple{i}{c^1,\ldots, c^i}}$ then $c^i$ codes an assignment for all MSO variables ${x_l^j}$ for ${1\leq l\leq i}$ and ${1\leq j\leq k_l}$ through components ${V_l^j}$ of $c^i$. Precisely, when ${l\in\foind}$ variable ${x_l^j}$ is assigned to the unique vertex $v$ such that ${V_l^j(c^i_v)=1}$, and when ${l\not\in\foind}$ variable ${x_l^j}$ is assigned to the set of vertices ${\{v : V_l^j(c^i_v)=1\}}$. We denote this assignment by ${\alpha_{c^i}}$.
Moreover, ${\alpha_{c^i}}$ is an extension of assignment ${\alpha_{c^j}}$ for any ${1\leq j<i}$.

Next lemma ensures that ${\truth{\ldots}}$ predicate correctly codes truth of formula ${R(\ldots)}$ (matrix of $\Psi$) through the previous assignment translation.

\begin{lemma}\label{lem:fotruthcheck}
  Under the hypothesis of Lemma~\ref{lem:foquantifassign}, it holds that  ${F_{G,f}\models\truth{c^n}}$ if and only if ${(G,\alpha_{c^ n})\models R(x_1^1,\ldots, x_n^{k_n})}$.
\end{lemma}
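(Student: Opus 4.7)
The plan is to characterize the pre-images of $c^n$ under $F_{G,f}$ and to read off the truth of $R$ from them. By Lemma~\ref{lem:foquantifassign}, $c^n$ is of type $\llayers(n)$ at every vertex, and for every first-order index $a\in\foind$ and every $1\le b\le k_a$ there is a unique vertex $v_{a,b}$ with $V_a^b(c^n_{v_{a,b}})=1$, which is precisely $\alpha_{c^n}(x_a^b)$. Inspecting the five clauses of the local rule $f$, the only non-trivial transitions producing a state of type $\llayers(n)$ are $\pi$ applied to a $j$-truth-check state. Hence any $c'$ with $c'\to c^n$ must carry a state from $T_j$ at each vertex; the validity condition ``neighbouring vertices share the same type'' together with connectedness of $G$ forces the index $j$ to be uniform on $V$, and $\pi(c'_v)=c^n_v$ forces $c'_v=(c^n_v,j)$. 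Moreover, for no error state to appear in the image we must have $(\sigma_v,\mu_v)\locallymodels C_j$ at every vertex $v$.

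\textbf{Forward direction.} Assume $F_{G,f}\models\truth{c^n}$ and fix a corresponding $c'$ and uniform clause index $j$. I show $(G,\alpha_{c^n})\models C_j$ by a term-by-term analysis of the local truth-check rules. Any term $t$ mentioning a first-order variable $x_a^b$ is locally true by default at every vertex where $V_a^b=0$, so the non-trivial check occurs only at $v_{a,b}$. A short case split on the four atomic forms verifies that the local test at $v_{a,b}$ coincides with the MSO semantics of $t$ under $\alpha_{c^n}$: for $x_a^b\relsp{L}\sigma'$ it asserts $L(v_{a,b})=\sigma'$; for $x_a^b\relsp{E_\delta}x_p^q$ it exhibits a $\delta$-neighbour $w$ of $v_{a,b}$ with $V_p^q(c'_w)=1$, which identifies $w$ with $v_{p,q}$ in the first-order case and places $w$ in $\alpha_{c^n}(x_p^q)$ in the second-order case; the equality and membership cases are analogous, and negations are handled by the symmetric local rule. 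Thus $(G,\alpha_{c^n})\models C_j$, and a fortiori $(G,\alpha_{c^n})\models R$.

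\textbf{Backward direction.} Conversely, assume $(G,\alpha_{c^n})\models R$ and pick a clause $C_j$ satisfied by $\alpha_{c^n}$. Define $c'$ by $c'_v=(c^n_v,j)\in T_j$. The configuration $c'$ is valid: type uniformity is preserved because the added last component is globally constant, and the local bounds $K_i(c'_v)\le V_i^{\chi_i(c'_v)}(c'_v)$ are inherited from $c^n$ since $V_i$, $\chi_i$ and $K_i$ are unchanged by the lift. At each vertex $v$, any term $t$ of $C_j$ mentioning $x_a^b$ is locally satisfied whenever $V_a^b(c'_v)=0$, so it suffices to check $t$ at $v=v_{a,b}$; the same case analysis as above, run in reverse, shows that the MSO truth of $t$ under $\alpha_{c^n}$ implies its local truth at $v_{a,b}$. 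Therefore $(\sigma_v,\mu_v)\locallymodels C_j$ at every $v$, the CA rule applies $\pi$ uniformly, and $F_{G,f}(c')=c^n$, witnessing $F_{G,f}\models\truth{c^n}$.

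The only genuine work is the case-by-case verification that $\locallymodels$ faithfully mirrors the MSO semantics of each atomic formula once each first-order variable is pinned to its unique witness vertex; the ``true by default'' convention at vertices with $V_a^b=0$ is justified precisely by the uniqueness statement supplied by Lemma~\ref{lem:foquantifassign}, and the rest of the argument is bookkeeping around the structure of the local rule.
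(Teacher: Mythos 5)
Your proof is correct and follows essentially the same route as the paper's: in both directions you identify the pre-images of $c^n$ as the uniform lifts $c'_v=(c^n_v,j)$ (using connectedness plus local validity to force a single clause index $j$), and then reduce everything to the term-by-term check that $\locallymodels$ agrees with the MSO semantics at the unique witness vertex $v_{a,b}$ supplied by Lemma~\ref{lem:foquantifassign}, with the ``true by default'' convention handling all other vertices. Your explicit characterization of the pre-image set is if anything slightly more careful than the paper's, which asserts the truth-check type of pre-images without spelling out the inspection of the local rule.
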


\begin{proof}
  First, by Lemma~\ref{lem:foquantifassign}, $c^n$ is of type $\llayers(n)$ at each vertex and is a valid configuration. If $c$ is a pre-image of $c^n$, it must be valid (no error state in $c^n$) and of type $j$-truth-check for some $j$ for all vertices because the graph is connected (otherwise there would be two neighboring nodes with different types, contradicting local validity). It must also be the case that, at each vertex $v$, ${(\sigma,\mu)\locallymodels C_j}$ where ${\sigma=L(v)}$ and $\mu=\patt(c,v,1,1)$. Now consider any term $t$ of clause $C_j$ with first-order free variable $x_a^b$ and another free variable ${x_u^v}$, and let $v$ be the unique vertex such that ${V_a^b(c_v)=1}$, which is ${v=\alpha_{c^n}(x_a^b)}$. By definition of $f$, it must be the case that ${(G,\alpha_{c^n})\models t}$ (straightforward case analysis on atomic formulas). Since this holds for all terms of $C_j$ and since $R$ is a disjunction of clauses $C_j$, we deduce ${(G,\alpha_{c^ n})\models R(x_1^1,\ldots, x_n^{k_n})}$.

  Conversely, if ${(G,\alpha_{c^ n})\models R(x_1^1,\ldots, x_n^{k_n})}$, then by definition there must some $j$ such that ${(G,\alpha_{c^ n})\models C_j(x_1^1,\ldots, x_n^{k_n})}$. Let $c$ be the configuration of type $j$-truth-check everywhere such that ${\overline\pi(c)=c^n}$ (where $\overline\pi$ is the map applying $\pi$ at all vertices). We claim that ${F_{G,f}(c)=c^n}$ which implies ${F_{G,f}\models\truth{c^n}}$. Indeed, considering any term $t$ appearing in $C_j$ and any node $v$, and denoting $\sigma=L(v)$ and ${\mu=\patt(c,v,1,1)}$, we have:
  \begin{itemize}
  \item  ${(\sigma,\mu)\locallymodels t}$ if $v$ is the assignment of the leftmost free variable appearing in $t$ since ${(G,\alpha_{c^n})\models t}$,
  \item  ${(\sigma,\mu)\locallymodels t}$ by definition otherwise.
  \end{itemize}
  We deduce that ${(\sigma,\mu)\locallymodels C_j}$ and therefore that $F_{G,f}$ applies projection $\pi$ at each vertex $v$ on $c$, proving the claim.
\end{proof}

The proof of Theorem~\ref{theo:MSOFOCACON} then consists in applying inductively the definition of truth by assignments of variables simultaneously in $\Psi$ and $\phi$, use Lemma~\ref{lem:fotruthcheck} as base case and Lemma~\ref{lem:foquantifassign} for the induction step to translate assignments between MSO variables and FO variables.

\begin{proof}[Proof of Theorem~\ref{theo:MSOFOCACON}]
  Consider any ${(\Sigma,\Delta)}$-labeled graph which is connected. For ${1\leq i\leq n+1}$, denote by $\phi_i$ the subformula of $\phi$ starting from the $i$th quantifier (and without quantifier if ${i=n+1}$):
  \[\phi_i\isbydef \quantif_i'y^i\in\sequple{i}{y^1,\ldots, y^i},\ldots, \quantif_n'y^n\in\sequple{n}{y^1,\ldots,y^n}:\truth{y^n}\]
  and by $\Psi_i$ the subformula of $\Psi$ starting from the $i$th alternation of quantifiers:
  \[\Psi_i\isbydef \quantif_i x_i^1,\ldots, \quantif_i^{k_i}x_i^{k_i},\ldots,\quantif_n x_n^1,\ldots, \quantif_n x_n^n : R(x_1^1,\ldots, x_n^{k_n}).\]
  ${\phi_1=\phi}$ and ${\Psi_1=\Psi}$ and for ${i>1}$  the free variables of $\phi_i$ are ${y^1,\ldots, y^{i-1}}$ and those of $\Psi_i$ are ${x_1^1,\ldots, x_{i-1}^{k_{i-1}}}$.

  \framebox{Suppose first that ${F_{G,f}\models\phi}$.} We show by induction from ${i=n+1}$ downto $1$ that the following holds ($H_i$): 
  \begin{center}
    \begin{minipage}{.9\linewidth} \it
      for all configurations ${c^1,\ldots, c^{i-1}}$, if ${F_{G,f}\models\sequple{i-1}{c^1,\ldots, c^{i-1}}}$ and ${F_{G,f}\models\phi_i(c^1,\ldots, c^{i-1})}$ then ${(G,\alpha_{c^{i-1}})\models\Psi_i}$
    \end{minipage}
  \end{center}
  (to be understood without configuration and without assignment when ${i=1}$). Recall that assignment $\alpha_{c^{i-1}}$ is well-defined thanks to Lemma~\ref{lem:foquantifassign}. This implies ${G\models\Psi}$ since ${\phi_1=\phi}$ and ${\Psi_1=\Psi}$.
  \begin{itemize}
  \item the base case directly follows from Lemma~\ref{lem:fotruthcheck} since $\phi_{n+1}$ is exactly ${\truth{y^n}}$ and $\Psi_{n+1}$ is exactly ${R(x_1^1,\ldots, x_n^{k_n})}$;
  \item for the induction step, suppose that the hypothesis ($H_i$) holds and consider ${c^1,\ldots, c^{i-2}}$ such that ${F_{G,f}\models\sequple{i-2}{c^1,\ldots, c^{i-2}}}$ and ${F_{G,f}\models\phi_{i-1}(c^1,\ldots, c^{i-2})}$.
    \begin{itemize}
    \item if $\quantif'_{i-1} = \forall$, then ${F_{G,f}\models\phi_{i-1}(c^1,\ldots, c^{i-2})}$ means that for all configuration $c^{i-1}$ such that ${F_{G,f}\models\sequple{i-1}{c^1,\ldots, c^{i-1}}}$ it holds that ${F_{G,f}\models\phi_i(c^1,\ldots, c^{i-1})}$, and then by ($H_i$) it also holds that ${(G,\alpha_{c^{i-1}})\models\Psi_i}$. By item 3 of Lemma~\ref{lem:foquantifassign}, this implies ${(G,\alpha)\models\Psi_i}$ for any assignment $\alpha$ extending ${\alpha_{c^{i-2}}}$ to variables ${(x_{i-1}^1,\ldots, x_{i-1}^{k_{i-1}})}$. Now since ${\quantif_{i-1}}$ is a universal quantifier, this precisely means ${(G,\alpha_{c^{i-2}})\models\Psi_{i-1}}$.
    \item if $\quantif'_{i-1} = \exists$, then ${F_{G,f}\models\phi_{i-1}(c^1,\ldots, c^{i-2})}$ means that there exists a configuration $c^{i-1}$ such that ${F_{G,f}\models\sequple{i-1}{c^1,\ldots, c^{i-1}}}$ and ${F_{G,f}\models\phi_i(c^1,\ldots, c^{i-1})}$, and then by ($H_i$) it also holds that ${(G,\alpha_{c^{i-1}})\models\Psi_i}$. But ${\alpha_{c^{i-1}}}$ is some extension of assignment $\alpha_{c^{i-2}}$ to variables ${(x_{i-1}^1,\ldots, x_{i-1}^{k_{i-1}})}$. Thus, since $\quantif_{i-1}$ is an existential quantifier, we actually have ${(G,\alpha_{c^{i-2}})\models\Psi_{i-1}}$.
    \end{itemize}
  \end{itemize}

  \framebox{Suppose now that ${G\models\Psi}$.} We show by induction from ${i=n+1}$ downto $1$ that the following holds ($H_i'$): 
  \begin{center}
    \begin{minipage}{.9\linewidth} \it
      for any assignments $\alpha^{i-1}$ of variables ${x_1^1,\ldots,x_{i-1}^{k_{i-1}}}$, ${(G,\alpha^{i-1})\models\Psi_i}$ implies that for all configurations ${(c^1,\ldots, c^{i-1})}$ such that ${F_{G,f}\models\sequple{i-1}{c^1,\ldots, c^{i-1}}}$ and ${\alpha_{c^{i-1}}=\alpha^{i-1}}$, it holds that ${F_{G,f}\models\phi_i(c^1,\ldots,c^{i-1})}$
    \end{minipage}
  \end{center}
  (to be understood without free variable and without assignment when ${i=1}$). This implies ${(G\,F)\models\phi}$ since ${\phi_1=\phi}$ and ${\Psi_1=\Psi}$.
  \begin{itemize}
  \item the base case directly follows from Lemma~\ref{lem:fotruthcheck} since $\phi_{n+1}$ is exactly ${\truth{y^n}}$ and $\Psi_{n+1}$ is exactly ${R(x_1^1,\ldots, x_n^{k_n})}$.
  \item for the induction step, suppose that the hypothesis ($H_i'$) holds and consider an assignment $\alpha^{i-2}$ of variables ${x_1^1,\ldots,x_{i-1}^{k_{i-1}}}$ such that ${(G,\alpha^{i-2})\models\Psi_{i-1}}$
    \begin{itemize}
    \item if $\quantif_{i-1} = \forall$, then ${F_{G,f}\models\phi_{i-1}(c^1,\ldots, c^{i-2})}$ means that for all assignments $\alpha^{i-1}$ extending $\alpha^{i-2}$ to variables ${x_{i-1}^1,\ldots, x_{i-1}^{k_{i-1}}}$ we have ${(G,\alpha^{i-1})\models \Psi_i}$. So by hypothesis ($H_i'$) we also have ${F_{G,f}\models\phi_i(c^1,\ldots,c^{i-1})}$ for all configurations ${(c^1,\ldots, c^{i-1})}$ such that ${F_{G,f}\models\sequple{i-1}{c^1,\ldots, c^{i-1}}}$ and ${\alpha_{c^{i-1}}=\alpha^{i-1}}$. Since $\quantif_i'$ is universal with domain restriction by ${\sequple{i-1}{\ldots}}$, and because ${F_{G,f}\models\sequple{i-1}{c^1,\ldots, c^{i-1}}}$ implies ${F_{G,f}\models\sequple{i-2}{c^1,\ldots, c^{i-2}}}$, this actually means that, for all configurations ${(c^1,\ldots, c^{i-2})}$ such that ${F_{G,f}\models\sequple{i-2}{c^1,\ldots, c^{i-2}}}$ and ${\alpha_{c^{i-1}}=\alpha^{i-1}}$, it holds that  ${F_{G,f}\models\phi_{i-2}(c^1,\ldots, c^{i-2})}$. We thus have proven ($H_{i-1}'$).
    \item if $\quantif_{i-1} = \exists$, then ${F_{G,f}\models\phi_{i-1}(c^1,\ldots, c^{i-2})}$ means that there is an assignment $\alpha^{i-1}$ extending $\alpha^{i-2}$ to variables ${x_{i-1}^1,\ldots, x_{i-1}^{k_{i-1}}}$ with ${(G,\alpha^{i-1})\models \Psi_i}$. So, by hypothesis ($H_i'$), we also have ${F_{G,f}\models\phi_i(c^1,\ldots,c^{i-1})}$ for all configurations ${(c^1,\ldots, c^{i-1})}$ such that ${F_{G,f}\models\sequple{i-1}{c^1,\ldots, c^{i-1}}}$ and ${\alpha_{c^{i-1}}=\alpha^{i-1}}$. But for any configurations ${(c^1,\ldots, c^{i-2})}$ such that ${F_{G,f}\models\sequple{i-1}{c^1,\ldots, c^{i-2}}}$ there is a configuration $c^{i-1}$ such that ${F_{G,f}\models\phi_i(c^1,\ldots,c^{i-1})}$ and ${\alpha_{c^{i-1}}=\alpha^{i-1}}$ by item 3 of Lemma~\ref{lem:foquantifassign}. Since ${\quantif_{i-1}'}$ is existential with domain restriction by ${\sequple{i-1}{\ldots}}$ this actually means that for all configurations ${(c^1,\ldots, c^{i-2})}$ such that ${F_{G,f}\models\sequple{i-1}{c^1,\ldots, c^{i-2}}}$ it holds that ${F_{G,f}\models\phi_{i-1}(c^1,\ldots,c^{i-2})}$. We have thus proven ($H_{i-2}'$).
    \end{itemize}
  \end{itemize}
\end{proof}

\subparagraph*{Generalizing to arbitrary graphs.}
In the previous construction, we use the fact that considered graphs are connected in two places: to ensure uniformity of choice layers $\chi_i$ and to ensure a uniform choice of $j$ for testing clause $C_j$ with states of type $j$-truth-check. When generalized to possibly disconnected graphs, such uniformity conditions cannot be checked by the CA alone, simply because the CA has no possibility to communicate between connected components. We can compensate this impossibility by slightly changing the behavior of $F$ and adding new FO constraints in the definition of $\phi$. The price to pay is that the new definition of $\phi$ will depend on all parts of $\Psi$, not only its prefix signature.

First, the case of choice layers $\chi_i$ can be solved easily by de-grouping variables ${x_i^1}$ to ${x_i^{k_i}}$, \emph{i.e.} renumbering variables in the prefix of $\Psi$ by  letting ${k_i=1}$ and taking ${\sum_{1\leq i\leq n}k_i}$ as the new value of $n$. Then, choice layers become trivial (they contain just one state) and are therefore always uniform by definition.

To solve the case of truth check, we introduce a general pre-image counting trick to ensure that a configuration $c$ is uniform by a FO property. First, each state $s$ of the alphabet $S$ used by $c$ is associated to a distinct prime number $p_s$, and there is a probing mechanism that selects a set of vertices and allows exactly $p_s$ predecessors at selected vertices which are in state $s$, and only $1$ at vertices which are not selected. The trick to check that $c$ is $s$-uniform then consists in counting the number of pre-images up to ${\max_{s'\in S} p_{s'}}$: whatever the set of selected vertices, it should always be a power of $p_s$.

\begin{theorem}\label{theo:MSOFOCA}
  There are two recursive transformations $\tau_{FO}$ from MSO formulas to FO formulas and $\tau_{CA}$ from MSO formulas to CA local rules such that for any MSO formula $\Psi$, and any graph $G$ the following equivalence holds: ${G\models \Psi}$ if and only if ${F_{G,\tau_{CA}(\Psi)}\models\tau_{FO}(\Psi)}$.
\end{theorem}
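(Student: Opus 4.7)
The plan is to start from the construction of Theorem~\ref{theo:MSOFOCACON} and remove the two uses of connectivity identified in the preamble: uniformity of choice layers $\chi_i$ across the graph, and uniformity of the clause index $j$ carried by $j$-truth-check states. Both issues stem from the CA's inability to communicate between connected components, so the fix must be pushed into the FO side by giving the formula a way to test global uniformity of a configuration.

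The first use is eliminated by a syntactic preprocessing of $\Psi$: I would rewrite its quantifier prefix so that each quantifier binds exactly one variable (set every $k_i=1$ while enlarging $n$). Then each choice layer $\chi_i$ has a single possible value and is trivially uniform, so the local validity rule no longer needs to compare neighbors on $\chi_i$. The formula $\sequple{i}{\ldots}$ keeps its shape, but the sibling-counting argument inside $\okfov{\cdot}$ now only constrains first-order assignments per connected component, and global uniqueness of the $1$ in each first-order variable layer must be recovered by the FO machinery introduced below.

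The main obstacle is to force the configuration $c^n$ entering the clause-check step to have a globally uniform $j$-truth-check type, and more generally to test $s$-uniformity of any configuration by FO. I would use the prime-based preimage counting trick sketched by the author: associate to each state $s$ a distinct prime $p_s$, augment the CA alphabet with a binary probing layer, and arrange the local rule so that at every vertex marked by the probing layer the CA admits exactly $p_s$ distinct preimages when the base state is $s$, and exactly one preimage at unmarked vertices. Then, on both finite and infinite graphs, a configuration $c$ is $s$-uniform if and only if, for every probing pattern, the resulting number of distinct preimages is a power of $p_s$. This is expressible in FO by the ``there exist $k$ pairwise distinct predecessors'' shortcut applied for each $k$ up to $\max_{s'} p_{s'}$ and comparing against the legal prime-power patterns; in particular, non-uniformity is witnessed by probing a single minority vertex, producing a count divisible by a prime $p_{s'}\neq p_s$.

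Using this uniformity test, I would modify $\phi$ by inserting, just before $\truth{y^n}$, an existential quantification $\exists y'$ with $y'\to y^n$ asserting that $y'$ is uniformly of some $j$-truth-check type (through a disjunction over $j\in\{1,\ldots,d\}$), and by strengthening $\okfov{\cdot}$ so that the unique $1$ in each first-order variable layer is globally unique, again via the same uniformity test. The CA rule $f$ is augmented with the probing layer and the prime-controlled preimage multiplicity. The rest of the correctness proof then follows Theorem~\ref{theo:MSOFOCACON}: Lemmas~\ref{lem:foquantifassign} and~\ref{lem:fotruthcheck} are reproved with slightly strengthened statements reflecting the new global guarantees, and the induction on the quantifier prefix is unchanged. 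I expect the hard part to be designing the preimage-multiplication gadget strictly within Definition~\ref{def:CA}, so that the bounded radius and cap $k$ are compatible with the required preimage multiplicities $p_s$, and verifying that the FO test really captures $s$-uniformity on both disconnected and infinite graphs, where the total number of preimages is a priori infinite and the count must be read off purely from local probe-dependent combinatorics.
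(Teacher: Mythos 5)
Your proposal takes essentially the same route as the paper: de-group the quantifier prefix so that every $k_i=1$ (which makes the choice layers trivially uniform), and enforce a globally uniform clause index in the truth check by a prime-indexed marking/multiplicity gadget (the paper's $T_j^0$, $T_j^1$, $T_j^2$ layers) whose preimage counts are interrogated by the FO formula exactly as in your probing test. Your one divergence is unnecessary: the sibling count inside $\texttt{goodFOVAR}$ quantifies over whole configurations, so the number of valid completions of a control layer is the product over \emph{all} vertices of the graph of the local freedom, namely $2^m$ where $m$ is the total number of $1$s in the variable layer; hence it already forces global uniqueness of the $1$ on disconnected graphs and needs no strengthening (and a state-uniformity test would not express ``exactly one $1$'' anyway).
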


Using notations from the construction of Theorem~\ref{theo:MSOFOCACON}, let us now describe precisely the modifications required to generalize to arbitrary graphs.
As explained above, we assume ${k_i=1}$ for all ${1\leq i\leq n}$, so choice layers $\chi_i$ are always trivial and uniform. States of type $l$ for ${1\leq l\leq\llayers(n)}$ are identical as in the construction of Theorem~\ref{theo:MSOFOCACON}.
However, we need additional states to implement the truth check for the matrix $R$ of formula $\Psi$, and it will spread over 3 time steps of the CA. The key is to ensure that some clause $C_j$ from the disjunctive normal form of $R$ is chosen uniformly on the entire graph and to check everywhere that each terms of $C_j$ is locally correct given the assignment. Let ${2=p_1< p_2<\cdots <p_d}$ be the first $d$ prime numbers.
For ${1\leq j\leq d}$, let ${T_j^0 = S_{\llayers(n)}\times\{j\}}$ and ${T_j^1 = T_j^0\times \{0,1\}}$ and ${T_j^2= T_j^1\times\{1,\ldots,p_j\}}$. $T_j^1$ is used to \emph{mark} vertices, while $T_j^2$ is used to alter the number of preimages depending on $j$ and the mark. We use the notation $\pi$ to denote at the same time the natural projection from ${T_j^0}$ onto $S_{\llayers(n)}$, or from ${T_j^1}$ onto ${T_j^0}$ or from ${T_j^2}$ onto ${T_j^1}$, which removes the rightmost component of states. ${T_j^0}$ sates will be used to check clause $C_j$, while states from ${T_j^1}$ and ${T_j^2}$ will be used to guarantee through a pre-image counting trick that the same choice of $j$ is made on the entire graph, thus ensuring correctness of the truth check of formula $R$.

We can now define the state set of the CA local rule ${\tau_{CA}(\Psi)}$ as
\[S = \{e_0,e_1\}\cup \bigcup_{1\leq l\leq \llayers(n)} S_l \cup \bigcup_{1\leq j\leq d} T_j^0\cup T_j^1\cup T_j^2.\]
We say that the \emph{type} of an element of ${S}$ is $l$ if it belongs to $S_l$, \emph{error} if it is $e_0$ or $e_1$ and \emph{$(j,m)$-truth-check} if it belongs to ${T_j^{m}}$ for ${1\leq j\leq d}$ and ${m=0,1}$ or $2$. 

A configuration ${c\in S^V}$ is \emph{valid} if the following conditions hold:
\begin{itemize}
\item states of all pairs of neighboring vertices of $G$ are of the same type, and not of error type;
\item at each vertex $v$ the control layers have zeros where the corresponding variable layers indicated by the choice layers have, precisely: ${K_i(c_v)\leq V_i^{\chi_i(c_v)}(c_v)}$ for all $i$ such that ${K_i(c_v)}$ is defined (intuitively, a $1$ in a control layer is authorized only if their is a $1$ in the 'chosen' component of the corresponding variable layer);
\item for a state ${(s,m,w)\in T_j^2}$ where $s\in T_j^0$, ${m\in\{0,1\}}$ and ${w\in\{1,\ldots,p_j\}}$, it must be the case that ${w=1}$ whenever ${m=0}$ (this condition expresses intuitively, that only marked vertices can generate preimages and it will allow through a preimage counting trick in the FO formula to ensure that the choice to check truth of clause $j$ is coherent on the entire graph).
 \end{itemize}
We write ${\locvalid(\mu)}$ when the capped multiset $\mu$ represents a locally valid neighborhood according to the above conditions.


The modified CA local rule ${f}$ is almost identical as the one from Theorem~\ref{theo:MSOFOCACON} and defined as follows (denoting again $s$ the unique state such that ${\mu(s,\epsilon)\geq 1}$):
\[f(\sigma,\mu) =
  \begin{cases}
    e_{1-i} &\text{ if $s=e_i$}\\
    e_0 &\text{ otherwise, and if $\neg\locvalid(\mu)$,}\\
    e_0 &\text{ otherwise, and if $s$ has type ${(j,0)}$-truth-check and ${(\sigma,\mu)\not\locallymodels C_j}$,}\\
    s &\text{ otherwise, and if $s$ is of type $1$,}\\
    \pi(s) &\text{ otherwise.}
  \end{cases}
\]

The key aspect of this new construction is that the correctness of the distributed truth check implemented in the CA above by states of type ${(j,m)}$-truth check rely on a modification of the considered FO formula. Let ${\preimj{j}{y}}$ be a FO formula expressing that the number of pre-images of $y$ is either ${>p_d}$ or a multiple of $p_j$. We use the following  modified definition of formula $\truth{y}$:
\[\truth{y}\isbydef \bigvee_{1\leq j\leq d}\exists y^j : y^j\to y \wedge \bigl(\forall y': y'\to y^j \Rightarrow \preimj{j}{y'}\bigr)\]
which intuitively makes sense when $y$ represents a configuration everywhere of type $\llayers(n)$, so $y^j$ represents of configuration of type ${(j,0)}$-truth check everywhere. We this new definition of $\truth{y}$, we introduce a strong dependence of the FO-formula on the matrix part of $\Psi$ (by the presence of $d$ for instance), which was not the case in Theorem~\ref{theo:MSOFOCACON}.

The final FO formula $\phi$ is defined exactly as in Theorem~\ref{theo:MSOFOCACON} but using this modified version of $\truth{y}$:

\[\phi\isbydef \quantif_1' y^1\in\sequple{1}{y^1}, \quantif_2' y^2\in\sequple{2}{y^1,y^2},\ldots, \quantif_n' y^n\in\sequple{n}{y^1,\ldots,y^n} : \truth{y^n}\]

where the FO quantifier $\quantif_i'$ is $\exists$ if the MSO quantifier $\quantif_i$ is existential, and $\forall$ if $\quantif_i$ is universal.\\

The proof of Theorem~\ref{theo:MSOFOCA} is based on two lemmas adapted from Lemma~\ref{lem:foquantifassign} and \ref{lem:fotruthcheck}, and can be copied word for word from the proof of Theorem~\ref{theo:MSOFOCACON}, but simply removing the connectedness hypothesis.

\begin{lemma}\label{lem:foquantifassignnotcon}
  Let $G$ be a ${(\Sigma,\Delta)}$-labeled graph. Consider configurations ${c^1,\ldots, c^i}$ with ${1\leq i\leq n}$ such that ${F_{G,f}\models \sequple{i}{c^1,\ldots, c^i}}$, then the following holds:
  \begin{enumerate}
  \item $c^k$ is of type $\llayers(k)$ at each vertex, for ${1\leq k\leq i}$;
  \item  for ${1\leq l\leq i}$ and ${1\leq j\leq k_l}$, variable component $V_l^j(c^k)$ is the same for for all $k$ with ${l\leq k\leq i}$, and it is such that exactly one vertex is in state $1$ when $l\in\foind$ ;
  \item if ${i<n}$, for any assignment $\alpha$ of variables ${(x_{i+1}^1,\ldots,x_{i+1}^{k_{i+1}})}$ there exists $c^{i+1}$ such that \[F_{G,f}\models \sequple{i+1}{c^1,\ldots ,c^{i+1}}\] and ${V_{i+1}^j(c^{i+1}) = \alpha(x_{i+1}^j)}$ for all ${1\leq j\leq k_{i+1}}$.
  \end{enumerate}
\end{lemma}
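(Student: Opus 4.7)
The plan is to follow the proof of Lemma~\ref{lem:foquantifassign} essentially verbatim. The only place where the original argument invoked connectedness was in showing that the choice layer $\chi_l$ is constant across $G$; here we have set $k_i=1$ for every $i$, so each $\chi_i$ is a one-state layer that is uniform by definition, and the obstacle disappears. There is no serious combinatorial difficulty — the main task is just to audit each paragraph of the previous proof and confirm that no other use of connectedness is lurking.

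For item~1, I would inspect the orbit of each $c^k$ forced by the chain of $\sequ{k}{c^{k-1},c^k}$ predicates: this orbit reaches a fixed point in $\llayers(k)-\llayers(1)$ steps (plus one or two extra for the first-order endpoint), and because the error states $e_0, e_1$ form an oscillating absorbing cycle, no error state can occur along this orbit. Consequently the only CA transition applied at each interior vertex is $\pi$, which pins down the type of $c^k$ to be exactly $\llayers(k)$ at every vertex. This argument is entirely local and makes no use of connectedness.

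For item~2, preservation of $V_l^j(c^k)$ for $l \leq k$ is immediate, since $\pi$ strips only the topmost layer and variable layers sit deeper in the stack. For $l \in \foind$ I would appeal to $\okfov{c^l}$, which asserts that any configuration $c'$ with $F^2(c') = F^2(c^l)$ has exactly one sibling. Since valid preimages of $F(c')$ under $F$ differ from $c'$ only by flipping the control bit $K_l$ at vertices where $V_l(c'_v) = 1$, a straightforward cardinality count shows that the number of such vertices equals $1$. In the original proof connectedness was used precisely to ensure uniformity of the chosen component of $V_l$; in the present construction this is automatic since $\chi_l$ is a singleton.

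Finally, for item~3, I would build the extending configuration $c^{i+1}$ pointwise by appending the assignment $\alpha$ to $c^i$ as the new variable layer $V_{i+1}$ and, if $i+1 \in \foind$, a control layer $K_{i+1}$ that is identically zero. Local validity holds at every vertex by construction; iterating $F_{G,f}$ then peels the fresh layers one step at a time without ever triggering an error, yielding $\sequ{i+1}{c^i,c^{i+1}}$ and hence $\sequple{i+1}{c^1,\ldots,c^{i+1}}$. When $i+1 \in \foind$ the extra requirement $\okfov{c^{i+1}}$ follows from the sibling-count reasoning of item~2 applied to the unique vertex at which the newly installed $V_{i+1}$ carries a $1$. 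No step in this construction uses connectedness, so the lemma holds on arbitrary ${(\Sigma,\Delta)}$-labeled graphs.
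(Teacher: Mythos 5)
Your plan coincides with the paper's own (one-line) proof of this lemma: both reduce to the claim that the only use of connectedness in Lemma~\ref{lem:foquantifassign} was the uniformity of the choice layers, now trivial since $k_i=1$. However, the audit you promise is exactly where a real gap sits: connectedness is used a second time, implicitly, every time the argument passes from ``neighboring vertices have states of the same type'' (which is all that local validity gives) to ``the configuration has a uniform type on all of $G$''. On a disconnected graph a locally valid configuration may carry different types on different connected components, and this is not innocuous at the bottom of the layer tower, because type-$1$ states are fixed points of $f$ while type-$2$ states project onto them: the restriction of a type-$1$ configuration to a component therefore has \emph{two} preimages on that component (itself and its choice-layer lift), independently on each component.

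Concretely, suppose $1\in\foind$ and $G$ has two components $A$ and $B$. Your sibling count for $\okfov{c^1}$ assumes that every $y$ with $y\stepsdiff{2}F^2(c^1)$ has exactly $2^m$ preimages of $F(y)$, where $m$ is the number of vertices carrying a $1$ in $V_1^1$. But the quantification in $\okfov{\cdot}$ also ranges over the mixed configuration $y$ that is of type $\llayers(1)=3$ on $A$ and of type $2$ on $B$: then $F(y)$ is of type $2$ on $A$ and of type $1$ on $B$, and its preimages number $2^{m_A}\cdot 2$ (the factor $2$ coming from the type-$1$/type-$2$ choice on $B$). If the unique marked vertex lies in $A$ this gives $\siblings{y}=3$, and if it lies in $B$ the symmetric mixed configuration fails instead; either way $\okfov{\cdot}$ cannot be satisfied on a graph with at least two components, which contradicts the existence claim in item~3 (and the same type-mixing already undermines the ``type $\llayers(k)$ at each vertex'' claim of item~1 for preimages of type-$1$ configurations). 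So the assertion that item~1 ``is entirely local'' and that ``no other use of connectedness is lurking'' is precisely the step that needs a genuine argument or a modification of the construction; neither your proposal nor, for that matter, the paper's one-line proof supplies one.
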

\begin{proof}
  Straightforward adaptation from the proof of Lemma~\ref{lem:foquantifassign}, where $\chi_i$ components are by definition uniform since $k_i=1$ so that we don't use the hypothesis of connectivity on the graph.
\end{proof}



\begin{lemma}\label{lem:fotruthchecknotcon}
  Under the hypothesis of Lemma~\ref{lem:foquantifassignnotcon}, it holds that  ${F_{G,f}\models\truth{c^n}}$ if and only if ${(G,\alpha_{c^ n})\models R(x_1^1,\ldots, x_n^{k_n})}$.
\end{lemma}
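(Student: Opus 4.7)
The plan is to adapt the proof of Lemma~\ref{lem:fotruthcheck} to the disconnected setting, using the preimage-counting trick encoded in the modified $\truth{\cdot}$ formula to replace the role played by connectivity in the original argument. The structure will mirror the connected case: two implications, each analysing the local-to-global translation of clause satisfaction. The new ingredient is an argument that the preimage-count requirement $\preimj{j}{\cdot}$ forces the choice of clause index $j$ to be coherent across all connected components.

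For the forward direction, I would start from $F_{G,f} \models \truth{c^n}$, obtaining $j^* \in \{1,\ldots,d\}$ and a configuration $y^{j^*}$ such that $F_{G,f}(y^{j^*}) = c^n$ and $\forall y' : y' \to y^{j^*} \Rightarrow \preimj{j^*}{y'}$. By the same reasoning as in Lemma~\ref{lem:foquantifassignnotcon}, $y^{j^*}$ must be valid and of type $(j_v,0)$-truth-check at each vertex $v$, with $j_v$ constant on each connected component by the same-type validity condition but a priori varying between components, and the local clause check $C_{j_v}$ must pass at every vertex (otherwise $F$ would produce $e_0$). The key step is then the preimage-counting argument: each preimage $y'$ of $y^{j^*}$ is uniquely described by a marking $m : V \to \{0,1\}$, since $y'_v = (y^{j^*}_v, m_v) \in T_{j_v}^1$ is forced by $\pi(y'_v)=y^{j^*}_v$, and its own preimages, which must live in the $T_{j_v}^2$ layers, are counted as $\prod_{v \in m^{-1}(1)} p_{j_v}$: the validity condition $w=1$ whenever $m=0$ leaves $p_{j_v}$ free choices at marked vertices and exactly one at unmarked ones. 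If some component $C$ had $j_v = j_C \neq j^*$, then choosing $m$ to mark a single vertex inside $C$ would yield a preimage count equal to $p_{j_C}$, a value at most $p_d$ and coprime to $p_{j^*}$, contradicting $\preimj{j^*}{y'}$. Hence $j_v = j^*$ holds uniformly on $V$, and the term-by-term local-to-global translation from Lemma~\ref{lem:fotruthcheck} then applies unchanged: each term of $C_{j^*}$ is locally satisfied at the unique vertex witnessing its leftmost first-order variable, that witness is precisely $\alpha_{c^n}(x_a^b)$, and we conclude $(G,\alpha_{c^n}) \models C_{j^*}$, hence $(G,\alpha_{c^n}) \models R$.

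For the converse, assuming $(G,\alpha_{c^n}) \models R$, some clause $C_{j^*}$ is satisfied, and I would define $y^{j^*}$ uniformly by $y^{j^*}_v = (c^n_v, j^*)$ at every vertex. This configuration is valid, of type $(j^*,0)$-truth-check everywhere, and $F_{G,f}(y^{j^*}) = c^n$ follows from exactly the same local check as in Lemma~\ref{lem:fotruthcheck} applied independently on each component. The preimages $y'$ of $y^{j^*}$ are obtained by choosing an arbitrary marking $m$ on top of $y^{j^*}$, and each such $y'$ has $p_{j^*}^{|m^{-1}(1)|}$ preimages in $(T_{j^*}^2)^V$. Since this number is always a power of $p_{j^*}$, it is either a genuine multiple of $p_{j^*}$ or exceeds $p_d$ for sufficiently large markings, so $\preimj{j^*}{y'}$ holds, which together with $F_{G,f}(y^{j^*})=c^n$ establishes $F_{G,f} \models \truth{c^n}$.

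The principal obstacle is the uniformity argument in the forward direction: one must convert a single-component discrepancy of the labels $j_v$ into a global contradiction with $\preimj{j^*}{\cdot}$ by picking a marking $m$ concentrated on one vertex of the offending component, and this works precisely because $p_1,\ldots,p_d$ are distinct primes so that $p_{j_C}$ is neither a multiple of $p_{j^*}$ nor exceeds $p_d$. Once $j^*$-uniformity is established, all remaining steps are a transparent transplant of the connected-graph proof, as the distributed verification of $C_{j^*}$ runs independently on each component and no longer depends on connectivity of $G$.
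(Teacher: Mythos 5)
Your proof is correct and follows essentially the same route as the paper: the preimage-counting argument with distinct primes to force a globally uniform choice of clause index $j$, followed by the unchanged term-by-term local verification of $C_j$. If anything, your version is slightly more careful than the paper's in applying the uniformity argument directly to the witness configuration $y^{j^*}$ supplied by the disjunct of $\truth{c^n}$ (rather than to an arbitrary preimage of $c^n$), which is where the hypothesis on preimage counts actually lives.
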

\begin{proof}
  First, by Lemma~\ref{lem:foquantifassignnotcon}, the hypothesis implies that $c^n$ is of type $\llayers(n)$ at each vertex and is a valid configuration. If $c$ is a pre-image of $c^n$, it must be valid (because no error state appears in $c^n$), as well as any of its preimage, and any preimage of its preimage.\\

Suppose now that ${F_{G,f}\models\truth{c^n}}$.
We claim that any pre-image $c$ of $c^n$ must be of type ${(j,0)}$-truth-check at every vertex for some $j$ (\textit{i.e.} of uniform type on the entire graph). Indeed, if it were not the case, there would exist two vertices $v$ and $v'$ such that $c$ is of type ${(j,0)}$-truth-check at $v$ and of type ${(j',0)}$-truth-check at $v'$ with ${j\neq j'}$. Thus $c$ would have a preimage $c'$ where only $v$ is marked which would have exactly ${p_j}$ preimages, and another $c''$ where only $v'$ is marked which would have exactly ${p_{j'}}$ preimages, thus contradicting ${F_{G,f}\models\truth{c^n}}$.

It must also be the case that, at each vertex $v$, ${(\sigma,\mu)\locallymodels C_j}$ where ${\sigma=L(v)}$ and $\mu=\patt(c,v,1,1)$, because $F_{G,f}$ applied on $c$ checks this condition and $c^n$ contains no error. Now consider any term $t$ of clause $C_j$ with first-order free variable $x_a^b$ and another free variable ${x_u^v}$, and let $v$ be the unique vertex such that ${V_a^b(c_v)=1}$, which is ${v=\alpha_{c^n}(x_a^b)}$. By definition of $f$, it must be the case that ${(G,\alpha_{c^n})\models t}$ (straightforward case analysis on atomic formulas). Since this holds for all terms of $C_j$ and since $R$ is a disjunction of clauses $C_j$, we deduce ${(G,\alpha_{c^ n})\models R(x_1^1,\ldots, x_n^{k_n})}$.\\

  Conversely, if ${(G,\alpha_{c^ n})\models R(x_1^1,\ldots, x_n^{k_n})}$, then by definition there must some $j$ such that ${(G,\alpha_{c^ n})\models C_j(x_1^1,\ldots, x_n^{k_n})}$. Let $c$ be the configuration of type $(j,0)$-truth-check everywhere such that ${\overline\pi(c)=c^n}$ (where $\overline\pi$ is the map applying $\pi$ at all vertices). We claim that ${F_{G,f}(c)=c^n}$ which implies ${F_{G,f}\models\truth{c^n}}$ (because clearly all preimages of $c$ have a number of preimages which is a power of $p_j$ by construction). Indeed, considering any term $t$ appearing in $C_j$ and any node $v$, and denoting $\sigma=L(v)$ and ${\mu=\patt(c,v,1,1)}$, we have:
  \begin{itemize}
  \item  ${(\sigma,\mu)\locallymodels t}$ if $v$ is the assignment of the leftmost free variable appearing in $t$ since ${(G,\alpha_{c^n})\models t}$,
  \item  ${(\sigma,\mu)\locallymodels t}$ by definition otherwise.
  \end{itemize}
  We deduce that ${(\sigma,\mu)\locallymodels C_j}$ and therefore that $F_{G,f}$ applies projection $\pi$ at each vertex $v$ on $c$, proving the claim.
\end{proof}

\section{Consequences on FO model checking for CA}
\label{sec:coros}

A set of ${(\Sigma,\Delta)}$-labeled graphs, or graph language, is \emph{MSO-definable} if it is of the form ${\graphs{\Psi}}$ for some MSO formula $\Psi$. It is \emph{FOCA-definable} if it is of the form ${\graphs{\phi,f}}$ for some FO formula $\phi$ and some CA local rule $f$. From Theorem~\ref{theo:MSOFOCA}, we get the following immediate corollary.

\begin{corollary}
  MSO-definable and FOCA-definable graph languages are the same.
\end{corollary}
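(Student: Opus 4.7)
The plan is to prove the two inclusions separately, each as an immediate consequence of one of the translation theorems established in Section~\ref{sec:translation}.

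For the inclusion \emph{FOCA-definable $\subseteq$ MSO-definable}, I would start from an arbitrary FOCA-definable language $\graphs{\phi,f}$ and apply Theorem~\ref{theo:FOCAtoMSO} to obtain an MSO formula $\Psi = \tau(\phi,f)$ such that $F_{G,f}\models\phi \iff G\models\Psi$ for every graph $G$. This immediately gives $\graphs{\phi,f} = \graphs{\Psi}$, so the language is MSO-definable. Note that $\Psi$ has no free variables since $\phi$ has no free variables and $\tau$ is defined by structural induction respecting arity.

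For the converse inclusion \emph{MSO-definable $\subseteq$ FOCA-definable}, I would start from an arbitrary MSO-definable language $\graphs{\Psi}$ and apply Theorem~\ref{theo:MSOFOCA} to obtain the pair $\phi = \tau_{FO}(\Psi)$ and $f = \tau_{CA}(\Psi)$ such that $G\models\Psi \iff F_{G,f}\models\phi$ for every graph $G$. This gives $\graphs{\Psi} = \graphs{\phi,f}$, so the language is FOCA-definable.

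There is no real obstacle here: the corollary is a direct re-packaging of the two equivalences in the language of graph languages rather than model checking, and both translations $\tau$ and $(\tau_{FO},\tau_{CA})$ are recursive so the correspondence is even effective. The only point worth emphasizing is that Theorem~\ref{theo:MSOFOCA} (rather than the more restricted Theorem~\ref{theo:MSOFOCACON}) is needed on the MSO-to-FOCA side, because graph languages are by definition sets of arbitrary graphs and we do not want to restrict to connected graphs in the statement of the corollary.
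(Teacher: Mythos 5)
Your proposal is correct and matches the paper's intended argument: the paper presents this as an immediate corollary of the translation theorems, relying on Theorem~\ref{theo:FOCAtoMSO} for one inclusion and Theorem~\ref{theo:MSOFOCA} (the version for arbitrary, not necessarily connected, graphs) for the other, exactly as you do. Your remark about needing Theorem~\ref{theo:MSOFOCA} rather than Theorem~\ref{theo:MSOFOCACON} is a correct and worthwhile clarification of a point the paper leaves implicit.
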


Since the translations given in Section~\ref{sec:translation} are effective, we also obtain equivalence of model checking problems. Using \cite{Kuske_2005}, this gives a characterization of decidability of FO model checking for CA orbits on f.g. groups.

\begin{corollary}\label{coro:focamodelchecking}
  On any fixed graph, FO model checking for CA is many-one equivalent to MSO model checking.
  In particular, FO-model checking for CA on a f.g. group $\Gamma$ is decidable if and only if $\Gamma$ is virtually free.
\end{corollary}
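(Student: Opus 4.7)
The plan is to read this corollary as essentially a packaging of the two translation theorems already proved, combined with the Kuske--Lohrey characterization of groups on which MSO model checking is decidable.

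For the many-one equivalence on a fixed graph $G$, I would handle the two directions separately using the effective translations from Section~\ref{sec:translation}. Given an input pair $(\phi,f)$ for FO model checking on $G$, Theorem~\ref{theo:FOCAtoMSO} computably produces an MSO formula $\tau(\phi,f)$ with $F_{G,f}\models\phi\iff G\models\tau(\phi,f)$; this is by definition a many-one reduction from FO model checking for CA on $G$ to MSO model checking on $G$, and it is uniform in $G$. Conversely, given an MSO input $\Psi$, Theorem~\ref{theo:MSOFOCA} computably yields $(\tau_{FO}(\Psi),\tau_{CA}(\Psi))$ with $G\models\Psi\iff F_{G,\tau_{CA}(\Psi)}\models\tau_{FO}(\Psi)$, which is the reverse many-one reduction. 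Together these give many-one equivalence on any fixed $G$.

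For the second part, I would specialize to $G$ being the Cayley graph of a f.g.\ group $\Gamma$ (with some fixed symmetric generating set $\Delta$), and then invoke the result of Kuske and Lohrey \cite{Kuske_2005}: the MSO theory of the Cayley graph of $\Gamma$ is decidable if and only if $\Gamma$ is virtually free. Transporting this across the many-one equivalence established above immediately gives the corresponding characterization for FO model checking of CA orbits: it is decidable on $\Gamma$ iff $\Gamma$ is virtually free.

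There is essentially no obstacle left here since the hard content has already been done in Theorems~\ref{theo:FOCAtoMSO} and \ref{theo:MSOFOCA}. The only point requiring a small sanity check is that the characterization of decidability on Cayley graphs does not depend on the particular choice of (symmetric) generating set $\Delta$; this follows because changing generators induces an MSO interpretation between the corresponding Cayley graphs, so their MSO theories are mutually reducible and therefore either both decidable or both undecidable. With that in hand, the statement of the corollary is immediate.
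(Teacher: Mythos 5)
Your proposal is correct and follows exactly the route the paper intends: the many-one equivalence is the effectiveness of the translations in Theorems~\ref{theo:FOCAtoMSO} and \ref{theo:MSOFOCA}, and the group-theoretic characterization is then imported directly from \cite{Kuske_2005}. Your added remark on independence from the choice of generating set is a reasonable sanity check (the paper addresses this point separately in Section~\ref{sec:cayley-graphs-domino}) but does not change the argument.
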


We can get a more precise result on graphs of bounded degree, but we need an additional lemma in order to apply Theorem~\ref{theo:MSOFOCACON}.
It is well-known that undecidability in MSO can be obtained using the MSO-definability of grids and encoding Turing-computations on it \cite{courcellebook}. Moreover, MSO on bounded degree graphs actually allows to code quantification on edge sets and not only vertex sets \cite{Courcelle_1994}, so that we can express the grid minor relation (and not only the fact that a fixed graph is a minor). The following lemma doesn't use any new idea, but it ensures the fact that all this encoding process can be done within a fixed prefix signature. It also gives a variant of the construction for $\Sigma_1^1$-hardness (see \cite{rogers1987theory}) using infinite grids and the recurring domino problem \cite{Harel_1985}.

  A ${n\times n}$-grid (or simply a grid when $n$ is not specified), is the directed finite graph with vertices ${\{(i,j) : 1\leq i,j\leq n\}}$ and edges the set of pairs ${\bigl((i,j),(i+1,j)\bigr)}$ (the east edges) for ${1\leq j\leq n}$ and ${1\leq i < n}$, and pairs ${\bigl((i,j),(i,j+1)\bigr)}$ (the north edges) for ${1\leq j< n}$ and ${1\leq i \leq n}$.
A $\infty$-grid is the infinite directed graph with vertices ${\N\times\N}$ and same north/east adjacency relation.

\begin{lemma}\label{lem:notarithmso}
  Fix some $D$.  There exists a fixed quantifier prefix signature $\rho$ for MSO such that, for any graph $G$ of degree at most $D$ that contains arbitrarily large grid as minors, deciding whether a given MSO formula in prenex form with prefix signature $\rho$ is satisfied on $G$ is undecidable.
  There also exists a fixed quantifier prefix signature $\rho'$ for MSO such that, for any graph $G$ of degree at most $D$ that contains a $\infty$-grid as minor, deciding whether a given MSO formula in prenex form with prefix signature $\rho'$ is satisfied on $G$ is $\Sigma_1^1$-hard.
\end{lemma}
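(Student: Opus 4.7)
The plan is to reduce the classical domino problem (resp.\ the recurring domino problem) on $\Z^2$ to MSO model checking on $G$, encoding each input tileset by an MSO formula whose prefix signature depends only on $D$, not on the tileset. The reduction combines two classical ingredients: a uniform MSO description of grid minors in bounded-degree graphs, and the standard encoding of (recurring) domino problems into MSO on $\Z^2$.

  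First I would construct an MSO formula $\Psi_{\text{minor}}(X_1,\ldots,X_l)$ with a fixed number $l$ of second-order free variables (depending only on $D$), asserting that the $X_i$ encode a valid grid minor of $G$ as a collection of vertex-disjoint connected branch sets together with a grid-like east/north adjacency between them. Bounded degree is crucial here, since it implies the well-known equivalence between MSO2 and MSO1 on $G$ (edges can be partitioned into $O(D)$ matchings, each codable as a vertex set~\cite{Courcelle_1994,courcellebook}), allowing us to quantify over the required adjacency structure within MSO1. Connectivity of branch sets is a standard MSO1 predicate, so $\Psi_{\text{minor}}$ lives in a bounded quantifier alternation prefix independent of the minor size.

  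For the undecidability part I would use the compactness principle: a tileset $\mathcal{T}=\{t_1,\ldots,t_k\}$ tiles $\Z^2$ if and only if it tiles arbitrarily large finite squares; combined with the hypothesis on $G$, this is further equivalent to every grid minor of $G$ being $\mathcal{T}$-tileable. Given $\mathcal{T}$, I would then build
  \[
    \Phi_{\mathcal{T}} \isbydef \forall \overline X : \Psi_{\text{minor}}(\overline X) \Rightarrow \exists T_1,\ldots, T_k : \forall v : \bigl[\text{local tiling consistency at } v\bigr],
  \]
  whose prefix signature $\rho$ is fixed (essentially $(\forall,2)(\exists,2)(\forall,1)$, after absorbing the alternations hidden in $\Psi_{\text{minor}}$ into the corresponding outer blocks via standard prenex manipulations). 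Berger's theorem then transports undecidability of the domino problem to MSO model checking on $G$ restricted to formulas of prefix signature $\rho$.

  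For the $\Sigma_1^1$-hardness part, I would reduce from the recurring domino problem on $\Z^2$, which is $\Sigma_1^1$-complete by Harel~\cite{Harel_1985}. Using that $G$ contains the $\infty$-grid as a minor, I would existentially quantify an $\infty$-grid minor structure together with a tiling of it (infiniteness being forced by requiring each branch set to have an east and a north successor in the minor), universally check local consistency, and add the recurrence condition $\forall v, \exists w$ such that $w$ is reachable from $v$ in the grid and carries the distinguished recurring tile. The resulting prefix signature $\rho'$ is then of the form $(\exists,2)(\forall,1)(\exists,1)$, again independent of the input. The only non-routine point is to verify that the whole encoding fits within these fixed prefix signatures regardless of the size of the tileset or of the grid minor; bounded degree makes this possible by bounding the MSO2-to-MSO1 translation, and all tile-related variables can be grouped into a single second-order existential block.
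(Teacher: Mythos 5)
Your first part matches the paper's proof essentially step for step: both arguments quantify universally over grid minors (encoded via the bounded-degree MSO2-to-MSO1 translation of Courcelle, with contraction/deletion edge sets coded as tuples of vertex sets), then existentially over a tiling, and invoke compactness together with Berger's theorem; and both rely on the observation that grouping all tile-dependent set variables into single quantifier blocks keeps the prefix signature independent of the tileset.

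For the second part you diverge from the paper, and this is where there is a genuine gap. You propose to existentially quantify ``an $\infty$-grid minor structure together with a tiling of it'', forcing infiniteness by requiring every branch set to have north and east successors, and expressing recurrence by a reachability condition. The problem is that, unlike finite grids, the $\infty$-grid is not MSO-definable up to isomorphism by local axioms of this kind: a connected structure in which every vertex has north/east successors (even unique, commuting ones, with a unique source --- conditions you do not actually state) is in general only a quotient of $\N^2$, so the backward direction of your reduction, from satisfaction of the formula on $G$ to a genuine recurring tiling of the $\infty$-grid, requires an additional argument that every admissible non-standard model still yields one (for instance by lifting a consistent tiling of a quotient to $\N^2$ and checking that the recurrence condition survives the lift). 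This is plausibly repairable, but it is exactly the non-routine point of this half of the lemma, whereas you flag only the prefix-signature bookkeeping as non-routine. The paper sidesteps the issue entirely: its formula $\Psi_\infty$ never axiomatizes an infinite grid; it asserts that the chosen minor has a unique in-degree-$0$ vertex $x_0$ and that every finite induced grid anchored at $x_0$ extends to a strictly larger finite grid carrying a consistent coloring with a fresh occurrence of $s_0$ on the leftmost column, and then extracts the infinite recurring tiling as a limit over these nested finite grids. Since finite grids \emph{are} MSO-definable up to isomorphism, no non-standard models arise in that formulation.
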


\newcommand\psiadj{\Psi_{\texttt{adj}}}
\newcommand\gridnorth{\Psi_{\texttt{north}}}
\newcommand\grideast{\Psi_{\texttt{east}}}
\newcommand\psigrid{\Psi_{\texttt{grid}}}
\newcommand\psiminor{\Psi_{\texttt{minor}}}
\newcommand\griduple{\overline{X_{\mathtt{grid}}}}

\begin{proof}
  In this proof, label sets $\Sigma$ and $\Delta$ play no role as we will work on the underlying undirected unlabeled graph of $G$. So to simplify the proof, let's suppose that $G$ is undirected and that $\Sigma$ and $\Delta$ are singletons.
The necessity to have orientation inside grids to encode computation is recovered from the undirected graph $G$ through MSO by introducing adequate colorings. More precisely, an oriented grid is represented by a set of vertices $X$, a formula ${\psiadj(x_1,x_2)}$ substituted for an undirected adjacency relation on $X$, and a coloring on $X$, \textit{i.e.} a tuple of sets ${\griduple}$.
We then have an MSO formula ${\psigrid}$ such that ${\psigrid(X,\psiadj,\griduple)}$ holds if and only if the graph represented by $X$ and $\psiadj$ and $\griduple$ is isomorphic to an oriented grid. We also have formulas to test oriented adjacency along the direction \texttt{north} and \texttt{east}, for instance ${\gridnorth(X,\psiadj,\griduple,x_1,x_2)}$ holds if and only if $x_2$ is at the north of $x_1$ in the grid graph represented by $X$, $\psiadj$ and $\griduple$. We can also identify each side of the grid with MSO formulas. 

  The minor relation is definable in MSO2, the extension of MSO allowing quantification over vertices, set of vertices but also sets of edges. Indeed, given a set of vertices $X$ and a set of contracted edges $Y$ and a set of suppressed edges $Z$, we can write the property that two vertices $x_1$ and $x_2$ are neighbors in the minor graph induced by vertices $X$ with contraction of edges $Y$ and suppression of edges $Z$ by
expressing that $x_i$ is connected to some $x_i'$ through edges from $Y$ (for ${i=1,2}$) and that $x'_1$ is connected to $x'_2$ by an edge not in $Z$ nor in $Y$.
  
On bounded degree graphs, MSO2 definability can actually be translated into MSO definability \cite{Courcelle_1994} (see \cite[Theorem 7.10 and Theorem 9.38]{courcellebook} for more general results). The intuition is that it is possible to express in MSO that some coloring on vertices has strong enough properties so that an edge can be identified by a vertex and a color. This way a set of edges can be coded as a tuple of set of vertices (the size of the tuple depends on the degree of the graph). We therefore have an MSO formula $\psiminor(X,\overline{X_C},\overline{X_D},x_1,x_2)$ expressing that $x_1$ and $x_2$ are neighbors in the minor graph of $G$ induced by set of vertices $X$ and with contraction of edges coded via the tuple ${\overline{X_C}}$ and deletion of edges coded via the tuple ${\overline{X_D}}$.

Given a finite set $S$ and sets of horizontal and vertical dominos ${D_{\mathtt{east}}, D_{\mathtt{north}}\subset S\times S}$ we can express that a coloring of $X$ by $S$ represented through ${\overline{X_S}}$ respects the horizontal (resp. vertical) domino constraints on a pair $x_1$, $x_2$ of vertices by a simple disjunction of case on $D_{\mathtt{east}}$ (resp. $D_{\mathtt{north}}$) denoted $\Psi_{h}(x_1,x_2,\overline{X_S})$ (resp. ${\Psi_{v}(x_1,x_2,\overline{X_S})}$).  We also denote by ${\Psi_{OK}(X,\overline{X_S})}$ the formula expressing that $\overline{X_S}$ is a partition of $X$.
It is well-known that this formalism for the domino problem is equivalent to the Wang tile formalism through higher-block recoding (see proof of Theorem~\ref{theo:dominofoca} for details).

With this tools, we can first prove the first part of the lemma.
  Let us define formula $\Psi$ (depending on some given domino constraints) which expresses that any grid minor of $G$ admits a coloring by $S$ respecting the horizontal and vertical domino constraints, precisely:
  \begin{align*}
    \Psi &\isbydef \forall X, \forall \overline{X_C}, \forall\overline{X_D}, \forall \griduple,  \psigrid\bigl(X,\psiminor(X,\overline{X_C},\overline{X_D},x_1,x_2),\griduple\bigr) \Rightarrow \\
    & \mathbf{\exists \overline{X_S}}, \Psi_{OK}(X,\overline{\mathbf{X_S}})\wedge\forall x_1,\forall x_2, \\
    &\gridnorth\bigl(X,\psiminor(X,\overline{X_C},\overline{X_D},x_1,x_2),\griduple,x_1,x_2\bigr)\Rightarrow\Psi_{v}(x_1,x_2,\overline{\mathbf{X_S}})\\
&\wedge \grideast\bigl(X,\psiminor(X,\overline{X_C},\overline{X_D},x_1,x_2),\griduple,x_1,x_2\bigr)\Rightarrow\Psi_{h}(x_1,x_2,\overline{\mathbf{X_S}})
  \end{align*}

  Given a set of domino constraints it is undecidable whether ${G\models\Psi}$ because it is equivalent to the domino problem \cite{berger}: $G$ has arbitrarily large grid minors and having correct coloring for arbitrarily large grids as expressed by $\Psi$ is equivalent to having a correct coloring for an infinite grid by compacity.

  Moreover, applying the standard logical transformation for the first implication, it can be verified that $\Psi$ has a prenex normal form starting by ${\forall X, \forall \overline{X_C}, \forall\overline{X_D}, \forall \griduple, \forall \overline{X_S}}$ followed by a sequence of quantifiers which does not depend on $S$ nor on the domino constraints. This means that we can write the prenex normal form of $\Psi$ with a fixed prefix signature when $S$ and the domino constraints vary.\\

  Now, for the second part of the lemma, we use a reduction from a variant of the recurring domino problem which gives $\Sigma_1^1$-hardness (using the horizontal/vertical domino formulation instead of Wang tiles). Precisely, the following problem is $\Sigma_1^1$-hard for a fixed $s_0$ (problem R2 of \cite{Harel_1985}): deciding whether, for a given set of horizontal/vertical domino constraints with ${s_0\in S}$, there is a coloring of the $\infty$-grid that satisfy the constraints and contains infinitely many occurrences of $s_0$ on the leftmost column.

  To encode this problem, we first use an MSO formula $\Psi_\infty(X,\psiadj,\overline{X_S})$ expressing that:
  \begin{enumerate}
  \item the graph ${G(X,\psiadj)}$ represented by $X$ and $\psiadj$ has a unique vertex of in-degree $0$ denoted $x_0$;
  \item for any induced (finite) grid in the graph ${G(X,\psiadj)}$ with lower-left corner $x_0$, there is a strictly larger (finite) grid with the same lower-left corner such that:
    \begin{itemize}
    \item $\overline{X_S}$ is a partition and satisfies domino constraints on the larger grid, and
    \item there is a vertex on the leftmost column of the larger grid, outside the smaller grid, which is colored in $s_0$ by the coloring $\overline{X_S}$.
  \end{itemize}
  \end{enumerate}

  We then consider the formula $\Psi$ (depending on the domino constraints) which express that there exists a minor of $G$ and a coloring of it that satisfies $\Psi_\infty$:
  \[\Psi\isbydef \exists X, \exists \overline{X_C}, \exists\overline{X_D}, \exists\overline{X_S} : \Psi_\infty(X,\psiminor(X,\overline{X_C},\overline{X_D},x_1,x_2),\overline{X_S}).\]
  First, it should be clear that $\Psi$ has a prenex normal form where the prefix signature doesn't change when $S$ and the domino constraints vary.
  Second, we claim that $G\models\Psi$ if and only if there exists a coloring of the $\infty$-grid satisfying the domino constraints and containing infinitely many occurrences os $s_0$ on the first column. Indeed, if there is such a coloring of the $\infty$-grid and since the $\infty$-grid occurs as a minor of $G$, we can chose the corresponding variables quantified existentially to represent exactly this minor and this coloring so that $\Psi_\infty$ will be satisfied. Conversely, if ${(G,\alpha)\models \Psi_\infty(X,\psiminor(X,\overline{X_C},\overline{X_D},x_1,x_2),\overline{X_S})}$ for some assignment $\alpha$ of the variables representing both a minor and a coloring, then we can extract a valid coloring of the $\infty$-grid with infinitely many $s_0$ on the leftmost column from the validity of $\Psi_\infty$: just consider coloring $\overline{X_S}$ on the sequence of larger and larger grids which extend each other and add at each step a new occurrence of $s_0$ on the leftmost column.
\end{proof}

This lemma together with Theorem~\ref{theo:MSOFOCACON} gives undecidability of model checking for a fixed FO formula expressed in the following corollary. For the first part of the corollary, we use the grid minor theorem \cite{Robertson_1986} to translate the statement of the lemma in terms of treewidth \cite{Robertson_treewidth}, and for the second part, it is known that the Cayley graph of a f.g. group which is not virtually free has a thick end (see \cite{Ballier_2018}), hence it contains a $\infty$-grid as a minor by Halin's grid theorem \cite{Halin_1964}. The decidable part of the Corollary comes from \cite{Courcelle_1988,Muller_1985}.

\begin{corollary}\label{coro:onefo}
  Fix some $D$. There is a fixed FO formula $\phi$ such that for any connected graph $G$ of degree at most $D$, the set ${\cas{G,\phi}}$ is computable if and only if $G$ has finite treewidth. Moreover,   there is a FO formula $\phi'$ such that for any Cayley graph $G$ of a f.g. group with at most $D$ generators, the set ${\cas{G,\phi}}$ is:
  \begin{itemize}
  \item computable if the group is virtually free,
  \item $\Sigma_1^1$-hard otherwise.
  \end{itemize}

\end{corollary}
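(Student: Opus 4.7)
The plan is to combine Lemma~\ref{lem:notarithmso} with Theorem~\ref{theo:MSOFOCACON}, and then close the two equivalences using classical results about treewidth, bounded-degree graphs, and virtually free groups.

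For the undecidability side of the first statement, I would fix the degree bound $D$ and take $\rho$ to be the MSO prefix signature given by the first part of Lemma~\ref{lem:notarithmso}. The key observation is that by item~2 of Theorem~\ref{theo:MSOFOCACON}, for every prenex MSO formula $\Psi$ with prefix signature $\rho$, the FO formula $\tau_{FO}(\Psi)$ is the same formula, which I call $\phi$. Then the map $\Psi \mapsto \tau_{CA}(\Psi)$ is an effective many-one reduction from MSO model checking of prefix-$\rho$ formulas on $G$ to membership in $\cas{G,\phi}$ (this uses the restriction to connected graphs in Theorem~\ref{theo:MSOFOCACON}). Hence $\cas{G,\phi}$ is undecidable as soon as $G$ is a connected, degree-at-most-$D$ graph containing arbitrarily large grid minors. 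By the grid minor theorem of Robertson-Seymour, in the bounded-degree setting this is exactly the condition of having infinite treewidth.

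For the decidability side, if $G$ has finite treewidth, Courcelle's algorithmic metatheorem yields decidability of MSO model checking on $G$; combined with Theorem~\ref{theo:FOCAtoMSO}, which effectively reduces any question of the form ``is $f\in\cas{G,\phi}$?'' to an MSO model checking instance on $G$, we get that $\cas{G,\phi}$ is computable. This closes the first equivalence. For the Cayley graph variant, I would run the same argument with the prefix signature $\rho'$ from the second part of Lemma~\ref{lem:notarithmso}, producing (again via item~2 of Theorem~\ref{theo:MSOFOCACON}) a single FO formula $\phi'$ such that $\cas{G,\phi'}$ is $\Sigma_1^1$-hard whenever $G$ has bounded degree and contains an $\infty$-grid minor. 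A Cayley graph with at most $D$ generators is connected and has degree at most $2D$, and if its group is not virtually free then, as recalled in the excerpt (Ballier-Stein), the graph has a thick end, which by Halin's grid theorem produces an $\infty$-grid minor; this gives $\Sigma_1^1$-hardness. When the group is virtually free, its Cayley graph is context-free and MSO model checking is decidable (Muller-Schupp, Courcelle), so $\cas{G,\phi'}$ is computable via Theorem~\ref{theo:FOCAtoMSO}.

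The genuinely non-trivial ingredient is item~2 of Theorem~\ref{theo:MSOFOCACON}: since it guarantees that $\tau_{FO}(\Psi)$ depends only on the prefix signature of $\Psi$, the prefix-level undecidability/hardness provided by Lemma~\ref{lem:notarithmso} upgrades to hardness for a single fixed FO formula. Without this property we would only obtain a family of FO formulas indexed by the target MSO instance, falling short of the statement. The rest is a straightforward assembly of classical metatheorems (Courcelle, Robertson-Seymour, Muller-Schupp, Halin) together with the two translation results of Section~\ref{sec:translation}.
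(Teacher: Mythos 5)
Your proposal is correct and follows essentially the same route as the paper: it combines Lemma~\ref{lem:notarithmso} with item~2 of Theorem~\ref{theo:MSOFOCACON} to get a single fixed FO formula, uses the Robertson--Seymour grid minor theorem and Halin's theorem (via thick ends for non-virtually-free groups) exactly as the paper does, and closes the decidable directions with Theorem~\ref{theo:FOCAtoMSO} plus the Courcelle/Muller--Schupp metatheorems. You also correctly isolate the prefix-signature independence of $\tau_{FO}$ as the crucial ingredient that upgrades the hardness to a single formula.
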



In the context of modeling, it makes sense to consider distributed dynamical systems over arbitrary finite graphs. For instance automata networks (which are non-uniform CA on arbitrary finite graphs) are a well-established model for its use in the study of gene regulation networks \cite{Thomas_1973,Kauffman_1969}. The theory of automata networks has largely grown around FO properties of orbits (typically fixed points) and their crucial dependence on the graph \cite{Richard_2018}. However, although many results deal with computational complexity in automata networks \cite{Bridoux_2022,ricean}, no natural undecidability result appeared so far to our knowledge. By our translation result from MSO, we can import Trakhtenbrot's theorem \cite{Ebbinghaus_1995} to obtain an undecidability result for FO properties of CA orbits on finite graphs. It can intuitively be formulated as follows in the context of modeling: it is undecidable to know whether there is some finite interaction graph on which a given local interaction law (CA) induces a given dynamical property (FO).
By the way, this corollary doesn't need Trakhtenbrot's theorem since we have all the expressive power of MSO (not only FO logic on graphs), and we can obtain it for a fixed FO formula. For instance, it follows directly from Theorem~\ref{theo:MSOFOCACON} and the techniques of Lemma~\ref{lem:notarithmso} (see also \cite[Theorem 5.6]{courcellebook}).

\begin{corollary}\label{coro:finitesatfoca}
  There exists a FO formula $\phi$ such that the following problem is undecidable:  given some input CA local rule $f$, decide whether there exists a finite graph $G$ with ${F_{G,f}\models \phi}$.
\end{corollary}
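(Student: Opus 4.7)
The plan is to reduce the halting problem for Turing machines to the stated satisfiability problem, by combining Theorem~\ref{theo:MSOFOCACON} with a grid-tiling encoding in the spirit of the proof of Lemma~\ref{lem:notarithmso}. Given an input Turing machine $M$ (encoded as a tile set $T_M$ with boundary constraints certifying an accepting halting computation), I would first construct an MSO formula $\Psi_M$ in prenex normal form with a \emph{fixed} quantifier prefix signature $\rho$, independent of $M$, all the data of $M$ being absorbed into the quantifier-free matrix. The formula $\Psi_M$ would assert that the host graph $G$ is a finite grid (described in MSO using second-order variables plus a coloring giving the grid orientation, as in the proof of Lemma~\ref{lem:notarithmso}) and that it admits a coloring encoding an accepting halted computation of $M$. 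By the standard grid encoding of Turing computations, $\Psi_M$ is finitely satisfiable if and only if $M$ halts on empty input, which is undecidable; moreover any finite graph satisfying $\Psi_M$ is forced to be a grid, hence connected.

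I would then apply Theorem~\ref{theo:MSOFOCACON} to $\Psi_M$, yielding a CA local rule $f_M := \tau_{CA}(\Psi_M)$ effectively computable from $M$ and an FO formula $\phi := \tau_{FO}(\Psi_M)$ that depends only on the prefix signature $\rho$, hence is one single fixed formula. For any connected finite graph $G$, ${G\models \Psi_M}$ if and only if ${F_{G,f_M}\models \phi}$. Hence $M$ halts on empty input if and only if there exists a finite \emph{connected} graph $G$ with ${F_{G,f_M}\models \phi}$, which gives the desired many-one reduction, except that the corollary quantifies over \emph{arbitrary} finite graphs.

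To close this last gap, I would verify that disconnected finite graphs $G$ cannot spuriously satisfy ${F_{G,f_M}\models \phi}$. The key observation is that $\phi$ contains predicates of the form $\okfov{y}$ which demand $\siblings{y}=1$; on a disjoint union ${G=G_1\sqcup G_2}$ the number of preimages of a configuration factorizes as a product over components, so this sibling-counting check systematically fails on disconnected graphs as soon as $\rho$ contains at least one first-order quantifier (which I may arrange freely since Lemma~\ref{lem:notarithmso}'s encoding already uses such quantifiers). Should a residual corner case remain, I would alternatively enlarge the state set of $f_M$ with a product layer running the connectivity-detecting CA of Example~\ref{exa:connected} and conjoin $\phi$ with the corresponding fixed ``no period-$6$ orbit'' FO condition on that layer; this augmentation is absorbed into the matrix of $\Psi_M$ and so leaves the prefix signature — and therefore $\phi$ — unchanged. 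The main obstacle throughout is precisely this step of ruling out disconnected witnesses while preserving the fact that $\phi$ depends only on a fixed prefix signature; once this is handled, the rest of the argument is a routine assembly of the translation machinery of Section~\ref{sec:translation} and the grid encoding of Lemma~\ref{lem:notarithmso}.
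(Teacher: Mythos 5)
Your proposal follows essentially the same route as the paper: encode ``the whole finite graph is a grid carrying an accepting halting computation of $M$'' as an MSO sentence whose prenex form has a prefix signature independent of $M$, push everything machine-dependent into the matrix, apply Theorem~\ref{theo:MSOFOCACON} to get a fixed $\phi$ and an $M$-dependent local rule, and reduce from the halting problem. Your concern about disconnected finite witnesses is legitimate — the paper's own proof silently invokes Theorem~\ref{theo:MSOFOCACON}, whose stated guarantee is only for connected graphs — but your primary argument for dismissing them is wrong. The sibling-counting check does \emph{not} ``systematically fail'' on disconnected graphs: the number of preimages of a configuration on $G_1\sqcup G_2$ is the product of the per-component counts, and a product equals $2$ precisely when one component contributes $2$ and all others contribute $1$, which is exactly the situation the construction is designed to certify (one vertex, somewhere in the graph, where the control layer is free). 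Indeed Lemma~\ref{lem:foquantifassignnotcon} shows that the first-order-variable mechanism survives disconnection untouched. What actually breaks on disconnected graphs is uniformity: the choice layers $\chi_i$ and, more importantly, the clause index $j$ of the truth check need not agree across components, so different components can validate different clauses $C_j$ (terms whose first-order witness lives in another component being vacuously true), potentially yielding spurious models of $\phi$.

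Your fallback fix — a product layer running the connectivity-detecting rule of Example~\ref{exa:connected} together with the ``no orbit of minimal period $6$'' condition — is the right kind of repair, but it is not ``absorbed into the matrix of $\Psi_M$''; it is a modification outside the $\tau_{FO},\tau_{CA}$ translation, and it still yields a fixed $\phi$ only because the conjoined condition is itself fixed. It also needs one more check: the FO formula only sees the product dynamics, whose minimal periods are lcms of the component periods, so you must verify that the grid-checking rule contributes no minimal period $3$ or $6$ orbits (it contributes periods $1$ and $2$ only, via fixed points and the $e_0/e_1$ oscillation, so this does go through, but it must be said). Alternatively, one can observe that the formula $\Psi$ forces every vertex into the grid $X$ and that $\psigrid$ demands connectivity of the represented grid, and then argue directly that the specific construction of Theorem~\ref{theo:MSOFOCACON} cannot be fooled by a disconnected graph for this particular $\Psi$; either way, the gap you identified deserves the explicit treatment you attempted, just not with the sibling-counting argument.
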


\begin{proof}
  Finite rectangular ${n\times m}$-grids are MSO-definable \cite[proof of Proposition 5.14]{courcellebook} and their lower-left and top-right corners are also MSO-identifiable. As in lemma~\ref{lem:notarithmso}, they are represented through a coloring $\griduple$ and a set of vertices $X$. We can also easily express in MSO that some Turing computation starts from the lower-left corner on an empty tape and finishes in an accepting state exactly at the upper-right corner. Formally, a Turing machine computation is a coloring ${\overline{X_S}}$ of the grid respecting vertical and horizontal domino constraints. All this conditions being gathered in a formula $\Psi_{\mathtt{halt}}(X,\griduple,\overline{X_S})$, we can define formula $\Psi$ expressing that the entire graph is a grid that can hold a correct halting computation by:
  \[\exists X,\exists \griduple, \exists\overline{X_S}, \Psi_{\mathtt{halt}}(X,\griduple,\overline{X_S})\wedge \forall x, x\in X.\]
  Since no quantification depending on $S$ and the Turing computation appears in $\Psi_{\mathtt{halt}}$, it is clear that $\Psi$ can be written in prenex normal form with a fixed prefix signature when the Turing machine varies. So we can apply Theorem~\ref{theo:MSOFOCACON} and effectively obtain a pair $\phi,f$ such that ${F_{G,f}\models\phi}$ if and only if ${G\models\Psi}$ and $\phi$ doesn't depend on the Turing machine considered. $\phi$ is the formula announced in the corollary, and it is clear  that ${G\models\Psi}$ for some finite graph $G$ if and only if the Turing machine encoded in $\Psi$ halts starting from the empty tape, and halts exactly at the rightmost ever visited position of its tape. It remains to check that the halting problem with the additional condition of halting at the rightmost visited position of the tape is undecidable, which is straightforward by transforming any machine $M$ (with a tape alphabet using a blank state which can only be erased and never be written) to a new machine working as $M$ but, as soon as $M$ halts, launches a terminal routing that moves the head to the right until reaching the first blank state before stopping.
\end{proof}

\section{Cayley graphs and domino problems}
\label{sec:cayley-graphs-domino}

\newcommand\asympto{\overset{\infty}{=}}

The case of Cayley graphs of f.g. groups is particular for our approach in two relevant ways.
First, we can express in MSO that a set of vertices is infinite.
\newcommand\uplewalk{\overline{X_{\mathtt{walk}}}}

\begin{lemma}\label{lem:msocayleyinfinite}
  For any $D\geq 1$, there is an MSO formula ${\Psi(X)}$ such that, on any Cayley graph $G$ of some f.g. group with $D$ generators and any assignment $\alpha$, it holds that ${(G,\alpha)\models \Psi(X)}$ if and only if ${\alpha(X)}$ is infinite.
\end{lemma}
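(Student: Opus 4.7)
The plan is to combine a spanning tree of the Cayley graph with König's lemma. Since a Cayley graph with at most $D$ generators has degree at most $2D$, the Courcelle translation from MSO2 to MSO on bounded-degree graphs (already invoked in the proof of Lemma~\ref{lem:notarithmso}) lets me freely quantify over sets of edges. I will therefore write the formula in MSO2, using as shorthands the standard MSO2-definable predicates ``$T$ is a spanning tree of $G$'' (meaning $T\subseteq E$ is connected, acyclic and covers all vertices), ``$u$ is a $T$-ancestor of $x$ w.r.t.\ root $v_0$'' (meaning $u$ lies in every $T$-connected vertex set containing both $v_0$ and $x$, which by uniqueness of paths in a tree is the usual ancestor notion), and ``$u'$ is a $T$-child of $u$'' (meaning $\{u,u'\}\in T$ and $u$ is a $T$-ancestor of $u'$).

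The candidate formula is
\[\Psi(X)\isbydef \exists T,\exists v_0,\exists R:\ \text{spTree}(T)\wedge v_0\in R\wedge \bigl(\forall u\in R,\exists u'\in R : u'\text{ is a }T\text{-child of }u\bigr)\wedge \bigl(\forall u\in R,\exists x\in X: u\text{ is a }T\text{-ancestor of }x\bigr).\]
Intuitively, $R$ is meant to be an infinite branch from $v_0$ inside the subtree of $T$-ancestors of $X$. For correctness, the $(\Leftarrow)$ direction is a straightforward iteration: from a witness $(T,v_0,R)$, repeatedly picking a child in $R$ starting at $v_0$ yields an infinite sequence $v_0=u_0,u_1,u_2,\ldots$ of distinct vertices (since $T$ is acyclic, $u_k$ has $T$-depth exactly $k$), and each $u_k$ has an $X$-descendant $x_k$ at $T$-depth $\geq k$, so the $x_k$'s give infinitely many distinct elements of $X$. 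For the $(\Rightarrow)$ direction, assume $X$ is infinite and take $T$ to be any BFS spanning tree of $G$ rooted at some $v_0$ (a Cayley graph is connected, so a spanning tree exists); since $T$ has degree $\leq 2D$, its BFS shells have size $\leq (2D)^k$, hence the $T$-depths of $X$-vertices are unbounded. The ancestor subtree of $X$ in $T$ is therefore infinite and locally finite, and König's lemma supplies an infinite branch $v_0=u_0,u_1,\ldots$ from the root; take $R=\{u_k : k\geq 0\}$.

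Observe also that on a finite Cayley graph no set $R$ can witness the formula (iterating children in an acyclic $T$ would produce an injective infinite sequence of vertices inside a finite graph), so $\Psi(X)$ is vacuously false there, as it must be. The main obstacle I foresee is not conceptual but notational: turning the three informal predicates above into fully explicit MSO2 formulas and then applying the MSO2 to MSO translation for bounded-degree graphs. The König's lemma step only requires local finiteness, which is automatic for Cayley graphs of finitely generated groups.
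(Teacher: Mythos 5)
Your proof is correct, but it takes a genuinely different route from the paper's. The paper shows that for any infinite $X$ there is an oriented walk passing at most twice through each vertex and visiting infinitely many elements of $X$ (the existence argument is a spanning-tree recursion in the same spirit as yours), and then encodes that walk \emph{directly} in plain MSO by storing at each vertex up to two ``positions'', each carrying an outgoing direction $\delta\in\Delta$ and a pointer to a position of the target vertex; the formula then asserts that from every vertex some element of $X$ is reached further along the walk. This exploits the Cayley-graph structure (each label $\delta$ determines a unique neighbour) and needs no appeal to edge-set quantification. You instead quantify over a spanning tree $T$ as an edge set and an infinite branch $R$ inside the ancestor-closure of $X$, and outsource the encoding to Courcelle's MSO2-to-MSO translation on bounded-degree graphs --- a tool the paper already invokes in Lemma~\ref{lem:notarithmso}, so this is a legitimate dependency. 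What your route buys is a more transparent correctness argument (depth is strictly increasing along $R$, so $X$ has vertices of unbounded depth; conversely K\H{o}nig's lemma on the locally finite ancestor subtree of $X$ supplies $R$) and a statement that works verbatim on any connected bounded-degree graph, not just Cayley graphs; what it costs is the explicit, translation-free formula that the paper's encoding produces. One small point of care when you formalize the details: on \emph{infinite} graphs, express acyclicity and connectivity of $T$ via the bridge/cut characterizations (every $T$-edge is a $T$-bridge; every nonempty proper vertex subset has an outgoing $T$-edge), which remain correct in the infinite setting.
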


\begin{proof}
  First, for any infinite set of vertices $X$, there is an oriented walk on the group passing at most $2$ times by any vertex and visiting infinitely many elements of $X$. To see this, consider any spanning tree $T$ of the Cayley graph rooted at an arbitrary vertex $v_0$. At least one of the subtrees of $v_0$ contains infinitely many elements from $X$ (finite degree of the Cayley graph): choose one and choose one ${x\in X}$ belonging to this subtree. Then let $\rho$ be the path from $v_0$ to $x$ in $T$, and let $v$ be the last vertex of this path which possesses a subtree $T_v$ with infinitely many elements from $X$. Note that $v$ can be $v_0$ or $x$ or another vertex, but in any case the subtree $T_v$ is disjoint from $\rho$. We define the start of the walk as follows: $\rho$ from $v_0$ to $x$, then back from $x$ to $v$ following the inverse of $\rho$. We can then iterate this reasoning inside subtree $T_v$ from $v$ to obtain the desired path.

  To define a walk passing at most $2$ times by any vertex, it is sufficient to code that each vertex contains $0$, $1$ or $2$ positions (corresponding to steps of the walk) and associate to each position an outgoing direction $\delta\in\Delta$ and a number $1$ or $2$ indicating which of the two positions to go to in the vertex pointed by direction $\delta$. Such a walk can be represented by a tuple of second-order variables $\uplewalk$, and we can express by a formula $\Psi_{\mathtt{OK}}(\uplewalk)$ that this tuple actually represents a valid walk: each vertex contains at most $2$ positions, each position has a successor, exactly one position has no predecessor and all others have exactly one, and the walk is connected. We can also express by a formula ${\Psi_{\mathtt{reach}}(x_1,x_2,\uplewalk)}$ that vertex $x_2$ is reached by the walk $\uplewalk$ starting from $x_1$, or $x_1$ is outside the walk. From this, we have that the following formula $\Psi(X)$ expresses that $X$ is infinite:
  \[\Psi(X)\isbydef\exists\uplewalk,\Psi_{\mathtt{OK}}(\uplewalk)\wedge\forall x_1, \exists x_2: x_2\in X \wedge \Psi_{\mathtt{reach}}(x_1,x_2,\uplewalk).\]
It can be check that this formula only depends on $\Delta$, so it works on any Cayley graph of f.g. groups with generators renamed $\Delta$.
\end{proof}

\newcommand\FOasy{FO($\asympto$)}
From this lemma, it makes sense to extend the signature of FO logic with the addition of new relation $\asympto$ on configurations which is at the heart of the 'Garden of Eden' theorem \cite{cagroups}: we write ${c\asympto c'}$ whenever ${\{v:c_v\neq c'_v\}}$ is finite. We denote by \FOasy{} the extension of FO signature by adding relation $\asympto$. 
By Lemma~\ref{lem:msocayleyinfinite}, this extension remains within MSO. Precisely, in any fixed Cayley graph of a f.g. group and
by a straightforward extension of Theorem~\ref{theo:FOCAtoMSO}, we can compute from any formula in \FOasy{} and CA local rule, an equivalent MSO formula. We deduce that \FOasy{} model checking for CA is decidable on some f.g. group exactly when MSO model checking is, and exactly when FO model checking for CA is.

Besides, if $\Gamma$ is a f.g. group and $G_1$ and $G_2$ two Cayley graphs of $\Gamma$ with two different sets of generators $\Delta_1$ and $\Delta_2$, then the definable CA global maps ${F: S^\Gamma\to S^\Gamma}$ are the same on $G_1$ and $G_2$. More precisely, there is a computable translation $\tau$ on CA local maps such that for any local map $f$ for $G_1$, it holds:
${F_{G_1,f} = F_{G_2,\tau(f)}}$.
This simply comes from the fact that we can translate $\Delta_1$-walks into equivalent $\Delta_2$-walks.

Therefore, if $\phi$ is a FO formula, we have that the sets $\cas{\phi,G_1}$ and $\cas{\phi,G_2}$ are actually Turing-equivalent. Said differently, by Theorem~\ref{theo:FOCAtoMSO}, any FO formula (actually any \FOasy{} formula) defines a fragment of MSO logic whose model checking problem's Turing degree is independent of the choice of generators on a f.g. group $\Gamma$. It turns out that such fragments naturally capture the domino problem and its classical variants.

Given some finite set $S$, a domino specification $\domino$ is a set of pairs ${D_\delta\subseteq S^2}$ for each ${\delta\in\Delta}$. A configuration ${c\in S^V}$ is said $\domino$-valid for some graph if for any ${v,v'\in V}$ it holds: ${(v,v')\in E_\delta}$ implies ${(c_v,c_{v'})\in R_\delta}$. The domino problem on a fixed graph, consists in deciding given $\domino$ whether there exists a $\domino$-valid configuration. The seeded domino problem consists in deciding given ${\domino}$ and ${s_0\in S}$ whether there exists a $\domino$-valid configuration where $s_0$ occurs at some vertex. Finally, the recurring domino problem consists in deciding given ${\domino}$ and ${s_0\in S}$ whether there exists a $\domino$-valid configuration where $s_0$ occurs infinitely often. 


\begin{theorem}\label{theo:dominofoca}
  Fix any Cayley graph $G$ of any f.g. group, then:
  \begin{itemize}
  \item domino problem $\turingequiv$  ${\exists x, x\to x}$, 
  \item seeded domino problem $\turingequiv$    ${\exists x, \exists y, x\to x\wedge y\to x\wedge x\neq y}$, 
  \item recurring domino problem $\turingequiv$ ${\exists x, \exists y, x\to x\wedge y\to x\wedge \neg (x\asympto y)}$,
  \end{itemize}
  where '${\turingequiv \phi}$' means Turing-equivalent to the set ${\cas{\phi,G}}$ (model checking of $\phi$ for CAs on $G$).
\end{theorem}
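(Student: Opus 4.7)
The plan is to prove each of the three Turing-equivalences by establishing two reductions: from the domino variant to the CA model-checking formula, and back. The two directions hinge on making the structure ``fixed point plus preimages'' for the CA correspond to ``$\mathcal{D}$-valid configuration plus designated occurrences of $s_0$'' on the domino side.

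For the forward reduction, given a domino specification $\mathcal{D}=(D_\delta)_{\delta\in\Delta}$ over a state set $S$ and a seed $s_0\in S$, I would construct a CA local rule $f_\mathcal{D}$ on state set $S\cup\{(s_0,1)\}\cup\{e_0,e_1\}$ of radius $1$. The rule is: a state $e_i$ transitions to $e_{1-i}$; a state $s\in S$ or $(s_0,1)$ (which is treated as $s_0$ for constraint-checking) outputs its projection $\pi(s)\in S$ when all $\delta$-labeled neighbors $s'$ satisfy $(\pi(s),s')\in D_\delta$, and outputs $e_0$ otherwise. The oscillating error states ensure that any fixed point is both entirely in $S$ and $\mathcal{D}$-valid. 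Moreover, the preimages of a fixed point $c$ are exactly the configurations obtained from $c$ by replacing some subset of the $s_0$-vertices with $(s_0,1)$. Hence $\exists x, x\to x$ is satisfiable iff $\mathcal{D}$ has a valid configuration, $\exists x,\exists y, x\to x\wedge y\to x\wedge x\neq y$ iff some valid configuration contains $s_0$ (since a non-trivial preimage exists iff at least one $s_0$-vertex can be marked), and $\exists x,\exists y, x\to x\wedge y\to x\wedge\neg(x\asympto y)$ iff some valid configuration contains $s_0$ infinitely often (since the positions where $x$ and $y$ differ are exactly the marked positions).

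For the reverse reduction, I would use the standard higher-block recoding to convert a CA local rule $f$ of radius $r$ on state set $S$ into a domino specification $\mathcal{D}_f$ whose alphabet consists of the admissible radius-$r$ patterns around a vertex. The constraints $D'_\delta$ require that two $\delta$-adjacent patterns agree on their common part, and that the local application of $f$ to each pattern matches the central state; valid $\mathcal{D}_f$-tilings then correspond bijectively to fixed points of $F_{G,f}$. For the seeded and recurring formulas, I would enrich the alphabet to encode pairs $(c_v,y_v)$ together with a boolean mark set precisely when $c_v\neq y_v$, with local constraints enforcing simultaneously that $c$ is a fixed point and that $F_{G,f}(y)=c$; the seed state is then any tile with mark set. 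Existence of a valid tiling with a seed translates into the seeded FO property, and existence of a valid tiling with infinitely many seeds into the recurring one. Uniformity and bounded degree of a Cayley graph make the encoding finite and effective.

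The main obstacle I anticipate is in the seeded and recurring cases, where one must make sure that the correspondence between preimages of $c$ and ``markings'' is \emph{exact}. In the forward construction this means verifying that no spurious preimages arise (in particular from the error layer or from marking non-$s_0$ vertices), and in the reverse construction it means that the tile alphabet captures the full preimage relation of $F_{G,f}$ via nearest-neighbor constraints, without introducing tilings that do not come from a genuine $(c,y)$ pair. Once this bookkeeping is in place the rest is a direct verification, and the case $\neg(x\asympto y)$ follows from the previous one because the difference-set between $x$ and $y$ coincides exactly with the set of marked positions in the construction.
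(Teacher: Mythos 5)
Your proposal is correct and follows essentially the same route as the paper: a marked duplicate of $s_0$ plus period-two error states for the forward direction (the paper calls the marked state $t$ and uses the projection $\rho$ exactly as you sketch), and a higher-block recoding of pairs (fixed point, preimage) with the differing positions singled out as seed tiles for the reverse direction. The bookkeeping concerns you flag are precisely the points the paper verifies, and the finite disjunction over possible seed tiles in the reverse direction is what makes the reduction Turing rather than many-one, as the statement allows.
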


\begin{proof}
  In this proof $\Sigma$ is a singleton and therefore omitted from CA local rules to simplify notations.
  
  For the first item, given some domino specification $\domino$ it is easy to define a CA local rule $f_\domino$ whose fixed points are exactly $\domino$-valid configurations:
  \[f_\domino(\mu) =
    \begin{cases}
      s &\text{ if }\forall\delta, \mu(s',\delta)>0\Rightarrow (s,s')\in D_\delta\\
      s'\neq s&\text{ otherwise}
    \end{cases}
  \]
  where $s$ denotes the unique state such that ${\mu(s,\epsilon)>0}$.

 Conversely, given a local rule $f$ of radius $r$ over state set $S$ , we can define a domino specification $\domino$ over some state set $S' \subseteq S^{B(r)}$ where $B(r)$ denotes the ball of radius $r$ in the graph. It is just a matter of applying a higher-block recoding (a well-known technique in symbolic dynamics) and restrict to local patterns of fixed points. Let $S'$ be the set of patterns on which the local rule $f$ doesn't change the state (recall that on Cayley graphs of f.g. groups, $\mu$ gives the same information as a pattern from $S^{B(r)}$). Then the domino specification is defined by
  \[(p,p')\in D_\delta \iff \forall v,v'\in B(r), (v,v')\in E_\delta\Rightarrow p_{v}=p'_{v'}\]
  expressing that moving in direction $\delta$ can only change the state in a way compatible with translation of local patterns, or said differently, that the $S'$ configuration is the correct higher-block recoding of some $S$ configuration. It is clear that a $\domino$-valid configuration is just a higher-block recoding of a fixed-point of $f$.\\

  For the second and third items, given $\domino$ over state set $S$ and ${s_0\in S}$, we define $f_{\domino,s_0}$  over state set ${S\cup\{t,e_0,e_1\}}$ (where ${t,e_0,e_1}$ are not in $S$), which duplicates state $s_0$ in another state $t$ and use error states $e_0$ and $e_1$. Precisely:
  \[f_{\domino,s_0}(\mu) = 
    \begin{cases}
      e_{1-i} &\text{ if $s=e_i$}\\
      s_0 &\text{ otherwise and if $s=t$},\\
      s &\text{ otherwise and if }\forall\delta, \mu(s',\delta)>0\Rightarrow (\rho(s),\rho(s'))\in D_\delta,\\
      e_0&\text{ otherwise.}
    \end{cases}
  \]
  where $\rho : S\cup\{t\}\to S$ is the map that sends $t$ to $s_0$ and is the identity on $S$.
  Intuitively, this rule checks domino constraints while considering $t$ equal to $s_0$ and transforms any occurrence of $t$ to $s_0$. If a domino constraint is violated somewhere it generates an error state $e_i$ which will locally oscillate with period two thus preventing forever the existence of a fixed point.
  One can check that fixed point for $f_{\domino,s_0}$ are exactly the $\domino$-valid configuration: it cannot contain state $t$, nor any error state, and must follow the $D_\delta$ constraints everywhere. Moreover, such a configuration possesses a preimage other than itself if and only if it contains an occurrence of $s_0$ which is replaced by $t$ in the preimage. Moreover, each vertex at which such a configuration differs from its preimage must be in state $s_0$ in the configuration and $t$ in the preimage. Thus $\domino,s_0$ is a positive instance of the seeded (resp. recurring) domino problem if and only if ${f_{\domino,s_0}}$ satisfies formula of item two (resp. item three). 

  Finally, consider some CA local rule $f$ over state set $S$ and define a new set of domino constraints $\domino^+$ over state set ${S'\subseteq S^{B(r)}\times S^{B(r)}}$ that uses higher-block recoding as above, but this time to represent pairs of configuration ${(c,c')}$ such that $c$ is a fixed-point and $c'$ is a pre-image of $c$. Here $S'$ is the set of pairs of patterns over domain $B(r)$ such that the first pattern of the pair induces no change of state at position $0$ by local rule $f$ (\textit{i.e.}, locally a fixed point) and the second pattern of the pair is such that its image under $f$ is the state at position $0$ of the first (\textit{i.e.}, locally the second configuration is a preimage of the first). Then define $S'_0$ as the subset of $S'$ where the state at position $0$ differs between the first and second pattern. It then holds that $\domino^+$-valid configurations are exactly the higher-block recodings of pairs of configurations ${(c,c')}$ where $c$ is a fixed-point, $c'$ is a pre-image of $c$ and, moreover, $c$ and $c'$ differ at some vertex $v$ if and only if the $\domino^+$-valid configuration is in some state from $S_0'$ at $v$. Therefore, it follows that $F_{G,f}$ satisfies formula of item two if and only if $\domino^+,s_0$ is a positive instance of the seeded domino problem for some element ${s_0\in S_0'}$ (hence a Turing reduction). Similarly, $F_{G,f}$ satisfies formula of item three if and only if $\domino^+,s_0$ is a positive instance of the recurring domino problem for some element ${s_0\in S_0'}$ (hence a Turing reduction).
\end{proof}

The recurring domino problem is $\Sigma_1^1$-hard on $\Z^2$ \cite{Harel_1985}, as well as the model checking of the corresponding \FOasy{} formula from Theorem~\ref{theo:dominofoca}. It is just an existential formula, but it crucially uses relation $\asympto$. We can actually also obtain $\Sigma_1^1$-hardness on $\Z^2$ with a pure FO formula with just one quantifier alternation, using preimage counting trickery to check finiteness of a set and a reduction from the recurring domino problem.

\begin{theorem}\label{theo:foanadur}
  The problem ${\cas{\phi,\Z^2}}$ is ${\Sigma_1^1}$-hard where $\phi$ is the following formula:
  \[\phi\isbydef\exists y,y\to y\wedge \forall y',\forall y_1, \forall y_2,\forall y_3, (y'\neq y\wedge y'\to y\wedge \bigwedge_i y_i\to y')\Rightarrow \bigvee_{i\neq j}y_i=y_j.\]
\end{theorem}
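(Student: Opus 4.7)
The plan is a Turing reduction from the recurring domino problem on $\Z^2$, shown $\Sigma_1^1$-hard in \cite{Harel_1985}, to the model-checking problem $\cas{\phi,\Z^2}$. Given an instance $(\domino,s_0)$, I aim to construct a CA local rule $f$ so that $F_{\Z^2,f} \models \phi$ if and only if $\domino$ admits a valid configuration in which $s_0$ occurs infinitely often. The outer $\exists y$ will be witnessed by a $\domino$-valid tiling, and the inner $\forall$ will encode the infiniteness of the set of $s_0$-positions through a preimage-counting gadget.

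I start from the seeded-domino CA of Theorem~\ref{theo:dominofoca}, whose fixed points are exactly the $\domino$-valid configurations, and enrich its state set with auxiliary ``marker'' layers together with oscillating error states of period $2$. The design goal is that for every fixed point $y$, which is necessarily a $\domino$-valid tiling, the error mechanism forces any proper preimage $y'$ of $y$ to agree with $y$ everywhere except at a single vertex $v$, where $v$ must be an $s_0$-position in $y$ and carries a marker state in $y'$. Proper preimages of $y$ will thus be in natural bijection with the set $P_y = \{v : y_v = s_0\}$.

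The core trick, analogous in spirit to the preimage counting used in Theorem~\ref{theo:MSOFOCA}, is to calibrate the local rule so that the number of preimages of a single-marker configuration $y'$ reflects the finiteness of $P_y$. Intuitively, a preimage of $y'$ may either reproduce $y'$ itself or else introduce a second marker at an $s_0$-position ``paired'' with $v$ by the rule through a coupled local transport gadget. As long as such a paired $s_0$-position always exists, the preimage count stays at exactly $2$; when the pairing fails for some $y'$ — which I will arrange to happen precisely when $P_y$ is finite and $v$ is an extremal $s_0$-position along some distinguished direction — the breakdown of the coupling must free up independent local choices and produce three or more distinct preimages. Since $s_0$ recurs in $y$ iff $P_y$ is infinite, this yields the claimed equivalence $F_{\Z^2,f} \models \phi \iff (\domino,s_0)$ is a positive instance of the recurring domino problem.

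The main obstacle is the tightness of the preimage count. One has to simultaneously ensure: (i) that proper preimages of a fixed point are exactly the single-marker configurations, with no spurious preimages leaking through the error-propagation rule; (ii) that in the infinite-$P_y$ case the coupling between markers admits only one non-trivial canonical extension, capping the count at $2$; and (iii) that in the finite-$P_y$ case the breakdown of the coupling creates at least three genuinely pairwise distinct preimages. Orchestrating the interaction between marker propagation, error-state generation and domino validation — in particular the behaviour of the rule at the boundary between marked and unmarked regions and between $s_0$ and non-$s_0$ sites — is where I expect the delicate case analysis of the proof to concentrate.
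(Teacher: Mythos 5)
Your overall strategy --- reduce from the recurring domino problem, let the outer $\exists y$ be witnessed by a $\domino$-valid tiling, and use the count of preimages-of-preimages to detect whether $s_0$ occurs infinitely often --- is indeed the paper's strategy. But the gadget you leave unspecified is exactly where the proof lives, and the properties you ascribe to it cannot be realized. First, a CA local rule cannot force the proper preimages of a fixed point $y$ to carry exactly one marker: a configuration with two markers at distant $s_0$-positions is locally indistinguishable from a single-marker one, and the fixed formula $\phi$ offers no further handle to exclude it (this is the same obstruction the paper discusses when encoding first-order variables in Section~\ref{sec:translation}). So your claimed bijection between proper preimages of $y$ and $P_y$ fails, and the preimage counts of multi-marker configurations are left uncontrolled. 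Second, ``$v$ is an extremal $s_0$-position along some distinguished direction'' is not a locally checkable property, and it does not even separate finite from infinite: an infinite $P_y$ (say, a vertical half-line of $s_0$'s) still has extremal positions in the horizontal direction, so the criterion ``pairing fails iff $P_y$ is finite and $v$ is extremal'' is incoherent as stated, and nothing in the proposal explains how the ``coupled transport'' would locate a partner $s_0$ arbitrarily far away.

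The paper's actual gadget replaces markers at $s_0$-positions by a full auxiliary layer of locally checkable \emph{square annotations}: a set of domino constraints $\domino_\square$ whose valid configurations either contain at most one corner symbol or describe a complete finite square whose interior is locally distinguishable from its exterior, together with the local requirement that every occurrence of $s_0$ sit in the interior. A third layer allows one free bit exactly at corner vertices, so a proper preimage whose annotation has $k$ corners has $2^k$ preimages, and a case analysis gives $k\in\{0,1,4\}$. If $P_y$ is finite one can enclose all $s_0$'s in a finite square, producing a preimage with $2^4=16>2$ preimages and violating $\phi$; if $P_y$ is infinite no such square exists, every valid annotation has at most one corner, hence at most $2$ preimages, and $\phi$ holds. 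Note also that the ``many preimages'' case lands on the \emph{finite} side, the opposite of what your pairing mechanism aims for. To complete your argument you would need to supply something of this kind: a globally meaningful but locally verifiable witness of finiteness whose degrees of freedom are counted by preimages.
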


\newcommand\extn{\downarrow}
\newcommand\exts{\uparrow}
\newcommand\exte{\leftarrow}
\newcommand\extw{\rightarrow}
\newcommand\extne{\swarrow}
\newcommand\extnw{\searrow}
\newcommand\extse{\nwarrow}
\newcommand\extsw{\nearrow}
\newcommand\crne{C_{\texttt{ne}}}
\newcommand\crnw{C_{\texttt{nw}}}
\newcommand\crse{C_{\texttt{se}}}
\newcommand\crsw{C_{\texttt{sw}}}

\begin{proof}
  In plain English, formula $\phi$ requires the existence of a fixed point such that any of its pre-images (distinct from it) has at most two pre-images.  
  We proceed by reduction from the recurring domino problem (problem R1 of \cite{Harel_1985}). Consider an instance ${(\domino,s_0)}$ of this problem over states set $S$ and let us describe a CA local rule $f$ over states set $S'$, of radius $1$ and using $1$-capped multiset. The states from $S'$ are error states or states made of at most three layers, precisely:
  \[S' = S \cup S\times S_\square\cup S\times S_\square\times\{0,1\}\cup \{e_0,e_1\}\]
  where $e_0$ and $e_1$ are error states as usual, and $S_\square$ is used to code a particular set of configurations through domino constraints $\domino_\square$ (detailed below). The type of a state is either \textit{error}, or its number of layers (\textit{e.g.} $2$ for states from ${S\times S_\square}$). The key property of $\domino_\square$-valid configurations is that they either code a finite square zone with four corners where the interior can be distinguished from the exterior, or they contain at most one corner (possibly corresponding to having an infinite square with an infinite interior). More precisely, we define
  \[S_\square = \{N,S,E,W,E_d, W_d,I,I_d,I^d,\crne,\crnw,\crse,\crsw,\extn,\exts,\exte,\extw,\extne,\extnw,\extse,\extsw\}\]
  where $I$ and $I_d$ and $I^d$ are the interior states, and $\crne,\crnw,\crse,\crsw$ are corner states. $\domino_\square$ is the set of vertical and horizontal dominos appearing in this partial configuration:
  \[
    \begin{matrix}
      \extnw & \extn & \extn & \extn & \extn & \extn & \extn & \extn & \extne\\
      \extw & \extnw & \extn & \extn & \extn & \extn & \extn & \extne & \exte\\
      \extw & \extw & \crnw & N & N &N & \crne & \exte & \exte\\
      \extw & \extw & W & I & I^d & I_d & E_d & \exte & \exte\\
      \extw & \extw & W & I^d & I_d & I & E & \exte & \exte\\
      \extw & \extw & W_d & I_d & I & I & E & \exte & \exte\\
      \extw & \extw & \crsw & S & S & S & \crse & \exte & \exte\\
      \extw & \extsw & \exts & \exts & \exts & \exts & \exts & \extse & \exte\\
      \extsw & \exts & \exts & \exts & \exts & \exts & \exts & \exts & \extse\\
    \end{matrix}
  \]
The index \emph{d} in the above symbols mark a diagonal of the square (since we use only vertical and horizontal domino constraints, we need to mark staircase diagonals). It should be clear that a $\domino_\square$-valid configuration contains at most one occurrence of each corner state type, at most one horizontal segment of $N$ (resp. of $S$) and at most one vertical segment of $W_d$/$W$ (resp. $E$/$E_d$). Also a valid configuration with four corners contains finitely many occurrences of interior states. Moreover, it can be checked by a simple case analysis that if two corners are present, then actually four are present.
  
 The behavior of $f$ is essentially to let configurations of type $1$ unchanged, to project ${S\times S_\square\times\{0,1\}}$ onto ${S\times S_\square}$ and ${S\times S_\square}$ onto $S$. The local rule also verifies the following conditions (and generate an error with period two oscillations between $e_0$ and $e_1$ if one condition is violated):
\begin{itemize}
\item two neighboring states must have the same type,
\item a configuration using only states from $S$ must represents a $\domino$-valid configuration,
\item the $S_\square$ component of a configuration of type 2 or 3 must be a $\domino_\square$-valid configuration,
\item each occurrence of state $s_0$ in the $S$ component of states, must be at a vertex where the $\domino_\square$ component is in an interior state ($I$ or $I_d$ or $I^d$),
\item a configuration of type $3$ must be such that the $\{0,1\}$ component at any vertex is $0$, except when the $S_\square$ component represents a corner, in which case it can be either $0$ or $1$ (at this point the reader should see coming a preimage counting trick in order to count the number of corners).
\end{itemize}
With these conditions, the only possible fixed-points are type 1 configuration representing a $\domino$-valid configuration ; their pre-images must additionally contain a $S_\square$ component representing a $\domino_\square$-valid configuration ; and the number of possible pre-images of each such pre-image is $2^k$ where $k$ is the number of corners in the $S_\square$ component.

Therefore, a fixed-point such that each of its pre-image has at most $2$ pre-images (like specified in formula $\phi$) is a configuration of type 1 such that it is impossible to form a pre-image with two or more corners, \textit{i.e.} such that it is impossible to form a pre-image with finitely many interior states without generating an error state, and therefore it represents a $\domino$-valid configuration with infinitely many occurrences of $s_0$.
Conversely, it is clear that for each $\domino$-valid configuration with infinitely many occurrences of $s_0$ we can construct a fixed point with the above properties.
\end{proof}

\section{Perspectives}
\label{sec:open}
We see several interesting research directions inspired by the approach taken in this work.

First, we believe that the dependence of $\phi$ on the degree or the number of generators in Corollary~\ref{coro:onefo} is an artifact that can be removed with more work in the proof of Lemma~\ref{lem:notarithmso}.
The same proof techniques should also provide hardness result at any level of the analytical hierarchy.

Then, this corollary should be put into perspective with the Ballier-Stein conjecture \cite{Ballier_2018} saying that the domino problem on a f.g. group is decidable if and only if the group is virtually free. On one hand, it seems natural to ask whether the recurring domino problem (or its equivalent FO formula from Theorem~\ref{theo:dominofoca}) can play the role of formula $\phi'$ in Corollary~\ref{coro:onefo}. 
On the other hand, N. Pytheas Fogg pointed us simple examples of $4$-regular graphs having an $\infty$-grid as subgraph on which the domino problem is decidable. So formula $\phi$ in Corollary~\ref{coro:onefo} cannot be the FO formula expressing the existence of a fixed point (Turing-equivalent to the domino problem), and we wonder how simple such formula $\phi$ can be. Actually, we can ask a similar question for Corollary~\ref{coro:finitesatfoca}.

In general, we believe that the Turing degrees of FO-model checking problems for various concrete formulas is worth being investigated.  As mentioned above, the Turing degree of all such model checking problems for a fixed FO formula is independent of the choice of generators on f.g. groups, and we wonder how they change when changing the group among non virtually free groups. Injectivity of CA is a natural candidate that received little attention to our knowledge since the seminal result on $\Z^2$ \cite{Kari_injsurj}.

Finally, we believe that there exists a fixed CA rule $f$ for which  the FO-model checking problem is undecidable on graph $\Z^2$ (the rule is fixed, the formula is given as input). While we see the proof ingredient to obtain this specifically for $\Z^2$, we have no idea of whether it is always the case that undecidability of FO model checking for CA orbits can be obtained for a fixed CA rule on any f.g. which is not virtually free.

\section{Acknowledgment}
\label{sec:ack}

We thank anonymous referees for their feedback and their suggestions to improve the presentation. We also warmly thank N. Pytheas Fogg for their hints about domino problems on regular graphs and numerous stimulating discussions that inspired this work.

\bibliography{biblio.bib}

\begin{thebibliography}{10}

\bibitem{Amoroso_1972}
S.~Amoroso and Y.N. Patt.
\newblock Decision procedures for surjectivity and injectivity of parallel maps
  for tessellation structures.
\newblock {\em Journal of Computer and System Sciences}, 6(5):448--464, October
  1972.
\newblock \href {https://doi.org/10.1016/s0022-0000(72)80013-8}
  {\path{doi:10.1016/s0022-0000(72)80013-8}}.

\bibitem{Arrighi_2013}
Pablo Arrighi and Gilles Dowek.
\newblock Causal graph dynamics.
\newblock {\em Information and Computation}, 223:78--93, February 2013.
\newblock \href {https://doi.org/10.1016/j.ic.2012.10.019}
  {\path{doi:10.1016/j.ic.2012.10.019}}.

\bibitem{ARRIGHI_2017}
Pablo Arrighi, Simon Martiel, and Vincent Nesme.
\newblock Cellular automata over generalized cayley graphs.
\newblock {\em Mathematical Structures in Computer Science}, 28(3):340--383,
  May 2017.
\newblock \href {https://doi.org/10.1017/s0960129517000044}
  {\path{doi:10.1017/s0960129517000044}}.

\bibitem{Aubrun_2018}
Nathalie Aubrun, Sebasti{\'a}n Barbieri, and Emmanuel Jeandel.
\newblock {\em About the Domino Problem for Subshifts on Groups}, pages
  331--389.
\newblock Springer International Publishing, 2018.
\newblock \href {https://doi.org/10.1007/978-3-319-69152-7_9}
  {\path{doi:10.1007/978-3-319-69152-7_9}}.

\bibitem{dominosurface}
Nathalie Aubrun, Sebasti{\'a}n Barbieri, and Etienne Moutot.
\newblock The domino problem is undecidable on surface groups.
\newblock Schloss Dagstuhl, Leibniz-Zentrum f{\"u}r Informatik, 2019.
\newblock \href {https://doi.org/10.4230/LIPICS.MFCS.2019.46}
  {\path{doi:10.4230/LIPICS.MFCS.2019.46}}.

\bibitem{Ballier_2018}
Alexis Ballier and Maya Stein.
\newblock The domino problem on groups of polynomial growth.
\newblock {\em Groups, Geometry, and Dynamics}, 12(1):93--105, March 2018.
\newblock \href {https://doi.org/10.4171/ggd/439} {\path{doi:10.4171/ggd/439}}.

\bibitem{bartholdi2023domino}
Laurent Bartholdi.
\newblock The domino problem for hyperbolic groups, 2023.
\newblock \href {https://arxiv.org/abs/2305.06952} {\path{arXiv:2305.06952}}.

\bibitem{Bartholdi_2022}
Laurent Bartholdi and Ville Salo.
\newblock Simulations and the lamplighter group.
\newblock {\em Groups, Geometry, and Dynamics}, 16(4):1461--1514, November
  2022.
\newblock \href {https://doi.org/10.4171/ggd/692} {\path{doi:10.4171/ggd/692}}.

\bibitem{berger}
R.~Berger.
\newblock The undecidability of the domino problem.
\newblock {\em Mem. Amer. Math Soc.}, 66, 1966.

\bibitem{life}
Elwyn~R. Berlekamp, John~H. Conway, and Richard~K. Guy.
\newblock {\em Winning Ways for your Mathematical Plays}, volume~2.
\newblock Academic Press, 1982.
\newblock chapter 25.

\bibitem{Bridoux_2022}
Florian Bridoux, Am{\'e}lia Durbec, Kevin Perrot, and Adrien Richard.
\newblock Complexity of fixed point counting problems in boolean networks.
\newblock {\em Journal of Computer and System Sciences}, 126:138--164, June
  2022.
\newblock URL: \url{http://dx.doi.org/10.1016/j.jcss.2022.01.004}, \href
  {https://doi.org/10.1016/j.jcss.2022.01.004}
  {\path{doi:10.1016/j.jcss.2022.01.004}}.

\bibitem{cagroups}
T.~Ceccherini-Silberstein and M.~Coornaert.
\newblock {\em Cellular automata and groups}.
\newblock Springer Monographs in Mathematics. Springer-Verlag, Berlin, 2010.
\newblock \href {https://doi.org/10.1007/978-3-642-14034-1}
  {\path{doi:10.1007/978-3-642-14034-1}}.

\bibitem{Chopard_1998}
Bastien Chopard and Michel Droz.
\newblock {\em Cellular Automata Modeling of Physical Systems}.
\newblock Cambridge University Press, December 1998.
\newblock \href {https://doi.org/10.1017/cbo9780511549755}
  {\path{doi:10.1017/cbo9780511549755}}.

\bibitem{Courcelle_1988}
Bruno Courcelle.
\newblock The monadic second-order logic of graphs, ii: Infinite graphs of
  bounded width.
\newblock {\em Mathematical Systems Theory}, 21(1):187--221, December 1988.
\newblock \href {https://doi.org/10.1007/bf02088013}
  {\path{doi:10.1007/bf02088013}}.

\bibitem{Courcelle_1990}
Bruno Courcelle.
\newblock The monadic second-order logic of graphs. i. recognizable sets of
  finite graphs.
\newblock {\em Information and Computation}, 85(1):12--75, March 1990.
\newblock URL: \url{http://dx.doi.org/10.1016/0890-5401(90)90043-H}, \href
  {https://doi.org/10.1016/0890-5401(90)90043-h}
  {\path{doi:10.1016/0890-5401(90)90043-h}}.

\bibitem{Courcelle_1994}
Bruno Courcelle.
\newblock The monadic second order logic of graphs vi: on several
  representations of graphs by relational structures.
\newblock {\em Discrete Applied Mathematics}, 54(2-3):117--149, October 1994.
\newblock URL: \url{http://dx.doi.org/10.1016/0166-218X(94)90019-1}, \href
  {https://doi.org/10.1016/0166-218x(94)90019-1}
  {\path{doi:10.1016/0166-218x(94)90019-1}}.

\bibitem{courcellebook}
Bruno Courcelle and Joost Engelfriet.
\newblock {\em {Graph structure and monadic second-order logic. A
  language-theoretic approach.}}
\newblock Encyclopedia of Mathematics and its applications, Vol. 138.
  {Cambridge University Press}, June 2012.
\newblock Collection Encyclopedia of Mathematics and Applications, Vol. 138.

\bibitem{Dennunzio_2012}
Alberto Dennunzio, Enrico Formenti, and Julien Provillard.
\newblock Non-uniform cellular automata: Classes, dynamics, and decidability.
\newblock {\em Information and Computation}, 215:32--46, June 2012.
\newblock URL: \url{http://dx.doi.org/10.1016/j.ic.2012.02.008}, \href
  {https://doi.org/10.1016/j.ic.2012.02.008}
  {\path{doi:10.1016/j.ic.2012.02.008}}.

\bibitem{Ebbinghaus_1995}
Heinz-Dieter Ebbinghaus and J{\"o}rg Flum.
\newblock {\em Finite Model Theory}.
\newblock Springer Berlin Heidelberg, 1995.
\newblock \href {https://doi.org/10.1007/978-3-662-03182-7}
  {\path{doi:10.1007/978-3-662-03182-7}}.

\bibitem{dominoexcludedminor}
Louis Esperet, Ugo Giocanti, and Cl{\'e}ment Legrand-Duchesne.
\newblock The structure of quasi-transitive graphs avoiding a minor with
  applications to the domino problem, 2023.
\newblock \href {https://doi.org/10.48550/ARXIV.2304.01823}
  {\path{doi:10.48550/ARXIV.2304.01823}}.

\bibitem{ricean}
Guilhem Gamard, Pierre Guillon, Kevin Perrot, and Guillaume Theyssier.
\newblock Rice-like theorems for automata networks.
\newblock Schloss Dagstuhl - Leibniz-Zentrum für Informatik, 2021.
\newblock \href {https://doi.org/10.4230/LIPICS.STACS.2021.32}
  {\path{doi:10.4230/LIPICS.STACS.2021.32}}.

\bibitem{Gardner_1970}
Martin Gardner.
\newblock Mathematical games.
\newblock {\em Scientific American}, 223(4):120--123, October 1970.
\newblock \href {https://doi.org/10.1038/scientificamerican1070-120}
  {\path{doi:10.1038/scientificamerican1070-120}}.

\bibitem{Gottschalk_1973}
Walter Gottschalk.
\newblock {\em Some general dynamical notions}, pages 120--125.
\newblock Springer Berlin Heidelberg, 1973.
\newblock \href {https://doi.org/10.1007/bfb0061728}
  {\path{doi:10.1007/bfb0061728}}.

\bibitem{Halin_1964}
R.~Halin.
\newblock {\" U}ber unendliche wege in graphen.
\newblock {\em Mathematische Annalen}, 157(2):125--137, April 1964.
\newblock \href {https://doi.org/10.1007/bf01362670}
  {\path{doi:10.1007/bf01362670}}.

\bibitem{Harel_1985}
David Harel.
\newblock {\em Recurring Dominoes: Making the Highly Undecidable Highly
  Understandable}, pages 51--71.
\newblock Elsevier, 1985.
\newblock \href {https://doi.org/10.1016/s0304-0208(08)73075-5}
  {\path{doi:10.1016/s0304-0208(08)73075-5}}.

\bibitem{hedlund}
G.~A. Hedlund.
\newblock Endomorphisms and {A}utomorphisms of the {S}hift {D}ynamical
  {S}ystems.
\newblock {\em Mathematical Systems Theory}, 3(4):320--375, 1969.

\bibitem{Hellouin_de_Menibus_2023}
Benjamin Hellouin~de Menibus, Victor~H. Lutfalla, and Camille No{\^u}s.
\newblock {\em The Domino Problem Is Undecidable on Every Rhombus Subshift},
  pages 100--112.
\newblock Springer Nature Switzerland, 2023.
\newblock URL: \url{http://dx.doi.org/10.1007/978-3-031-33264-7_9}, \href
  {https://doi.org/10.1007/978-3-031-33264-7_9}
  {\path{doi:10.1007/978-3-031-33264-7_9}}.

\bibitem{entrsft}
Michael Hochman and Tom Meyerovitch.
\newblock A characterization of the entropies of multidimensional shifts of
  finite type.
\newblock {\em Annals of Mathematics}, 171(3):2011--2038, 2010.
\newblock \href {https://doi.org/10.4007/annals.2010.171.2011}
  {\path{doi:10.4007/annals.2010.171.2011}}.

\bibitem{Kari_injsurj}
Jarkko Kari.
\newblock Reversibility and surjectivity problems of cellular automata.
\newblock {\em Journal of Computer and System Sciences}, 48(1):149--182,
  February 1994.
\newblock \href {https://doi.org/10.1016/s0022-0000(05)80025-x}
  {\path{doi:10.1016/s0022-0000(05)80025-x}}.

\bibitem{Kari_1994}
Jarkko Kari.
\newblock Rice's theorem for the limit sets of cellular automata.
\newblock {\em Theoretical Computer Science}, 127(2):229--254, May 1994.
\newblock \href {https://doi.org/10.1016/0304-3975(94)90041-8}
  {\path{doi:10.1016/0304-3975(94)90041-8}}.

\bibitem{Kauffman_1969}
S.A. Kauffman.
\newblock Metabolic stability and epigenesis in randomly constructed genetic
  nets.
\newblock {\em Journal of Theoretical Biology}, 22(3):437--467, March 1969.
\newblock URL: \url{http://dx.doi.org/10.1016/0022-5193(69)90015-0}, \href
  {https://doi.org/10.1016/0022-5193(69)90015-0}
  {\path{doi:10.1016/0022-5193(69)90015-0}}.

\bibitem{KreutzerT10}
Stephan Kreutzer and Siamak Tazari.
\newblock On brambles, grid-like minors, and parameterized intractability of
  monadic second-order logic.
\newblock In {\em Proceedings of the Twenty-First Annual {ACM-SIAM} Symposium
  on Discrete Algorithms, {SODA} 2010, Austin, Texas, USA, January 17-19,
  2010}, pages 354--364. {SIAM}, 2010.
\newblock \href {https://doi.org/10.1137/1.9781611973075.30}
  {\path{doi:10.1137/1.9781611973075.30}}.

\bibitem{Kuske_2005}
Dietrich Kuske and Markus Lohrey.
\newblock Logical aspects of cayley-graphs: the group case.
\newblock {\em Annals of Pure and Applied Logic}, 131(1-3):263--286, January
  2005.
\newblock \href {https://doi.org/10.1016/j.apal.2004.06.002}
  {\path{doi:10.1016/j.apal.2004.06.002}}.

\bibitem{Lind_2020}
Douglas Lind and Brian Marcus.
\newblock {\em An Introduction to Symbolic Dynamics and Coding}.
\newblock Cambridge University Press, December 2020.
\newblock \href {https://doi.org/10.1017/9781108899727}
  {\path{doi:10.1017/9781108899727}}.

\bibitem{morsehedlund}
Marston Morse and G.~A. Hedlund.
\newblock Symbolic dynamics.
\newblock {\em Amer. J. Math.}, 3:286--303, 1936.

\bibitem{Muller_1985}
David~E. Muller and Paul~E. Schupp.
\newblock The theory of ends, pushdown automata, and second-order logic.
\newblock {\em Theoretical Computer Science}, 37:51--75, 1985.
\newblock \href {https://doi.org/10.1016/0304-3975(85)90087-8}
  {\path{doi:10.1016/0304-3975(85)90087-8}}.

\bibitem{Reiter_2015}
Fabian Reiter.
\newblock Distributed graph automata.
\newblock In {\em 2015 30th Annual ACM/IEEE Symposium on Logic in Computer
  Science}. IEEE, July 2015.
\newblock \href {https://doi.org/10.1109/lics.2015.27}
  {\path{doi:10.1109/lics.2015.27}}.

\bibitem{Richard_2018}
Adrien Richard.
\newblock Fixed points and connections between positive and negative cycles in
  boolean networks.
\newblock {\em Discrete Applied Mathematics}, 243:1--10, July 2018.
\newblock \href {https://doi.org/10.1016/j.dam.2017.12.037}
  {\path{doi:10.1016/j.dam.2017.12.037}}.

\bibitem{Robertson_treewidth}
Neil Robertson and P.D Seymour.
\newblock Graph minors. ii. algorithmic aspects of tree-width.
\newblock {\em Journal of Algorithms}, 7(3):309--322, September 1986.
\newblock \href {https://doi.org/10.1016/0196-6774(86)90023-4}
  {\path{doi:10.1016/0196-6774(86)90023-4}}.

\bibitem{Robertson_1986}
Neil Robertson and P.D Seymour.
\newblock Graph minors. v. excluding a planar graph.
\newblock {\em Journal of Combinatorial Theory, Series B}, 41(1):92--114,
  August 1986.
\newblock \href {https://doi.org/10.1016/0095-8956(86)90030-4}
  {\path{doi:10.1016/0095-8956(86)90030-4}}.

\bibitem{rogers1987theory}
H.~Rogers.
\newblock {\em Theory of Recursive Functions and Effective Computability}.
\newblock MIT Press, 1987.

\bibitem{Schwentick_1998}
Thomas Schwentick and Klaus Barthelmann.
\newblock {\em Local normal forms for first-order logic with applications to
  games and automata}, pages 444--454.
\newblock Springer Berlin Heidelberg, 1998.
\newblock \href {https://doi.org/10.1007/bfb0028580}
  {\path{doi:10.1007/bfb0028580}}.

\bibitem{Thomas_1973}
Ren{\'e} Thomas.
\newblock Boolean formalization of genetic control circuits.
\newblock {\em Journal of Theoretical Biology}, 42(3):563--585, December 1973.
\newblock URL: \url{http://dx.doi.org/10.1016/0022-5193(73)90247-6}, \href
  {https://doi.org/10.1016/0022-5193(73)90247-6}
  {\path{doi:10.1016/0022-5193(73)90247-6}}.

\bibitem{Thomas_1991}
Wolfgang Thomas.
\newblock {\em On logics, tilings, and automata}, pages 441--454.
\newblock Springer Berlin Heidelberg, 1991.
\newblock \href {https://doi.org/10.1007/3-540-54233-7_154}
  {\path{doi:10.1007/3-540-54233-7_154}}.

\end{thebibliography}

\end{document}